\documentclass[12pt, a4paper]{scrartcl}
\usepackage[svgnames,dvipsnames,rgb]{xcolor} %colours
\usepackage{amsthm}
\usepackage{amsmath}
\usepackage{amssymb}
\usepackage{mathtools}
\usepackage{algorithm}
\usepackage{algpseudocode}
\usepackage{caption}
\usepackage{subcaption}
\usepackage{tikz}
\usepackage{graphicx}
\usepackage{hyperref}
\usepackage[nameinlink]{cleveref}
\usepackage{orcidlink}
\hypersetup{
   breaklinks=true,colorlinks=true,linkcolor=colB!80!black,citecolor=DarkSeaGreen!80!black,urlcolor=colB!80!black
}

\theoremstyle{plain}
\newtheorem{lemma}{Lemma}[section]
\newtheorem{theorem}[lemma]{Theorem}
\newtheorem{remark}[lemma]{Remark}
\theoremstyle{definition}
\newtheorem{definition}[lemma]{Definition}

\bibliographystyle{plainurl}

\title{Flexible realizations existence:\\ NP-completeness on sparse graphs and algorithms}

\author{Petr Laštovička\footnote{Faculty of Information Technology, Czech Technical University in Prague, Czechia,
  \texttt{\{petr.lastovicka, jan.legersky\}@fit.cvut.cz}} \and
Jan Legerský$^\ast$}

\date{}

% Required for some Matplotlib (3.9.*) exported graphs to render as
% its pgd export feature is broken
% https://github.com/matplotlib/matplotlib/issues/27907

% In algorithms block redefines require and ensure to input and output

% Bitwise operations for algorithms

% Tikz setup
\usetikzlibrary{calc}

\colorlet{ecol}{black!50!white}
\definecolor{colR}{rgb}{.932,.172,.172} %x11 Firebrick2
\definecolor{colB}{rgb}{.255,.41,.884} %svgnames RoyalBlue
\definecolor{colOrange}{RGB}{255,191,0} %Amber

% colors bright form https://personal.sron.nl/~pault/#sec:qualitative
\definecolor{currentstroke1}{HTML}{BBBBBB}
\definecolor{currentstroke2}{HTML}{EE6677}
\definecolor{currentstroke3}{HTML}{228833}
\definecolor{currentstroke4}{HTML}{CCBB44}
\definecolor{currentstroke5}{HTML}{66CCEE}
\definecolor{currentstroke6}{HTML}{AA3377}
\definecolor{currentstroke7}{HTML}{4477AA}

\tikzstyle{vertex}=[circle, draw, fill=black, inner sep=0pt, minimum size=4pt]
\tikzstyle{smallvertex}=[vertex, minimum size=2pt]
\tikzstyle{vertexw}=[circle, draw=white, fill=white, inner sep=0pt, minimum size=2pt]
\tikzstyle{vertexSig}=[circle, draw, fill=colOrange, inner sep=0pt, minimum size=4pt]
\tikzstyle{edge}=[line width=1.5pt,ecol]
\tikzstyle{dots}=[dotted dash=1.5pt]
\tikzstyle{redge}=[edge,colR]
\tikzstyle{bedge}=[edge,colB]
\tikzstyle{yedge}=[edge,colOrange]
\tikzstyle{arrow}=[edge,line width=1pt,->]
\tikzstyle{diedge}=[edge,->]

% My commands
\newcommand{\red}{\text{red}}
\newcommand{\blue}{\text{blue}}

\newcommand{\flexrilog}{\textsc{FlexRiLoG}}

\newcommand{\trcon}{$\triangle$-connected}
\newcommand{\trext}{$\triangle$-extended}
\newcommand{\trextshort}{$\triangle$-ext.}

\newcommand{\true}{\textit{True}}
\newcommand{\false}{\textit{False}}

\newcommand{\NN}{\mathbb{N}}

\DeclareMathOperator{\NAC}{NAC}
\newcommand{\nac}[1]{\NAC(#1)}

\DeclareMathOperator{\CP}{\Pi_{\NAC}}

\newcommand{\MSO}{\textbf{MSO}$_2$}
\DeclareMathOperator{\partition}{\mathbf{partition}}
\DeclareMathOperator{\cycle}{\mathbf{cycle}}
\DeclareMathOperator{\NACcond}{\mathbf{NACcond}}

\newcommand{\NaiveCycles}{\textsc{NaiveCycles}}
\newcommand{\None}{\textsc{None}}
\newcommand{\Neighbors}{\textsc{Neighbors}}
\newcommand{\NeighborsDegree}{\textsc{NeighborsDegree}}
\newcommand{\IsNACColoring}{\textsc{IsNACColoring}}
\newcommand{\MergeLinear}{\textsc{Linear}}
\newcommand{\SharedVertices}{\textsc{SharedVertices}}

\newcommand{\NACex}{$\exists$NAC}

\newcommand{\IntroduceVertexNode}[1]{\textsc{Introduce}$(#1)$}
\newcommand{\ForgetVertexNode}[1]{\textsc{Forget}$(#1)$}
\newcommand{\JoinNode}{\textsc{Join}}

\newcommand{\state}[8]{
  \begin{scope}[xshift=#1,yshift=#2]
    \node at (-0.6, 1) {$\biggl($};
    \node at (3.6, 1) {$\biggr)$};
    \node at (1.5, 0.1) {$,$};
    \node at (1.5, 2.8) {\footnotesize$#8$};
    \node[vertexw] (bottom) at (1.5, -0.5) {};
    \node[vertexw] (top) at (1.5, 3.3) {};
    \begin{scope}[opacity=0.2]
      \begin{scope}
        \node[vertex, fill=white, draw=white] (1r) at (0,2) {};
        \node[vertex, fill=white, draw=white] (2r) at (1,2) {};
        \node[vertex, fill=white, draw=white] (3r) at (1,1) {};
        \node[vertex, fill=white, draw=white] (4r) at (1,0) {};
        \node[vertex, fill=white, draw=white] (5r) at (0,0) {};
        \node[vertex, fill=white, draw=white] (6r) at (0,1) {};
      \end{scope}

      \begin{scope}[xshift=2cm]
        \node[vertex, fill=white, draw=white] (1b) at (0,2) {};
        \node[vertex, fill=white, draw=white] (2b) at (1,2) {};
        \node[vertex, fill=white, draw=white] (3b) at (1,1) {};
        \node[vertex, fill=white, draw=white] (4b) at (1,0) {};
        \node[vertex, fill=white, draw=white] (5b) at (0,0) {};
        \node[vertex, fill=white, draw=white] (6b) at (0,1) {};
      \end{scope}

			\foreach \v in {1r,2r,3r,4r,5r,6r, 1b,2b,3b,4b,5b,6b}{
				\node[smallvertex] at (\v) {};
			}
    \end{scope}
    \foreach \v in {#3}{
      \node[vertex] at (\v) {};
    }
    \foreach \u/\v in {#4}{
      \draw[redge] (\u) -- (\v);
    }
    \foreach \u/\v in {#5}{
      \draw[redge,dash pattern=on 2.3pt off 2.3pt] (\u) -- (\v);
    }
    \foreach \u/\v in {#6}{
      \draw[bedge] (\u) -- (\v);
    }
    \foreach \u/\v in {#7}{
      \draw[bedge,dash pattern=on 2.3pt off 2.3pt] (\u) -- (\v);
    }
  \end{scope}
}

\newcommand{\pathNAC}[3]{
  \begin{scope}[xshift=#1]
		\node[smallvertex, opacity=0.2] (1) at (0,2) {};
		\node[smallvertex, opacity=0.2] (2) at (1,2) {};
		\node[vertex] (3) at (1,1) {};
		\node[vertex] (4) at (1,0) {};
		\node[vertex] (5) at (0,0) {};
		\node[vertex] (6) at (0,1) {};

    \foreach \u/\v in {#2}{
      \draw[redge] (\u) -- (\v);
    }
    \foreach \u/\v in {#3}{
      \draw[bedge] (\u) -- (\v);
    }
  \end{scope}
}

\begin{document}

\maketitle

\begin{abstract}
	One of the questions in Rigidity Theory is whether a realization
	of the vertices of a graph in the plane is flexible, namely,
	if it allows a continuous deformation preserving the edge lengths.
	A flexible realization of a connected graph in the plane exists if and only if
	the graph has a NAC-coloring,
	which is a surjective edge coloring by two colors such that
	for each cycle, either all the edges have the same color, or there are at least two
	edges of each color.
	The question whether a graph has a NAC-coloring,
	and hence also the existence of a flexible realization,
	has been proven to be NP-complete.
	We show that this question is also NP-complete on graphs with maximum degree five
	and on graphs with the average degree
	at most $4+\varepsilon$ for every fixed $\varepsilon >0$.
	We also show that NAC-colorings can be counted in linear time for graphs with bounded treewidth.
	Since the only existing implementation of checking the existence of a NAC-coloring
	is rather naive, we propose new algorithms along with their implementation, which is significantly faster.
	We also focus on searching all NAC-colorings of a graph,
	since they provide useful information about its possible flexible realizations.
\end{abstract}

\section{Introduction}

A central object in Rigidity Theory is a (bar-joint) \emph{framework},
which is a graph $G$ with its $d$-dimensional \emph{realization} $p : V(G) \to \mathbb{R}^d$.
The framework $(G,p)$ is called \emph{flexible}
if it has a nontrivial \emph{flex}, that is a continuous path of $d$-dimensional realizations
$p_t$, $0 \le t \le 1$, with $p_0 = p$ such that for all $0 < t \leq 1$
we have $\|p_t(u)-p_t(v)\| = \|p(u)-p(v)\|$ for every edge $uv\in E(G)$,
but $\|p_t(u)-p_t(v)\| \neq \|p(u)-p(v)\|$ for some $u,v\in V(G)$.
Otherwise, the framework is called \emph{rigid}.

While for $d\geq 2$ it is NP-hard to decide whether a given $d$-dimensional framework
is rigid~\cite{Saxe1979} (see also~\cite{AbelDemainEtAl,Schaefer2013}),
the problem becomes more tractable, at least for ${d=2}$, when we consider \emph{generic}
behavior: given a graph, either almost all $d$-dimensional realizations are rigid,
or almost all $d$-dimensional realizations are flexible~\cite{generically_rigid_graphs}.
Hence, one can speak about (generically) \emph{$d$-rigid graphs} and (generically)
\emph{$d$-flexible} graphs.
A graph $G$ is \emph{minimally $d$-rigid} if it is $d$-rigid and
$G - e$ is $d$-flexible for every $e \in E(G)$.
Pollaczek-Geiringer~\cite{laman_original_1927}
and later independently Laman~\cite{laman_1970}
gave a characterization of minimally $2$-rigid graphs,
which yields a polynomial algorithm
for testing $2$-rigidity (see for instance~\cite{polynomial-min-rigid}),
while a combinatorial classification of $d$-rigid graphs for $d\geq 3$ remains an open problem.

In this paper, we focus on determining whether a graph admits a flexible realization.
We restrict ourselves to realizations $p$ that are \emph{quasi-injective}, namely,
$p(u) \neq p(v)$ for every edge $uv$.
Clearly, the question is mainly interesting for graphs that are generically rigid.
Searching for such ``paradoxical'' flexibility has a long history:
for instance Dixon~\cite{Dixon} gave two constructions of flexible injective $2$-dimensional realizations
of the complete bipartite graph $K_{3,3}$ at the end of the 19th century.
More than a hundred years later, these two constructions were shown
to be the only possible ones~\cite{WalterHusty}.

The question of the existence of a flexible realization is simple for $d = 1$,
since on the line every realization of a connected graph is $1$-rigid.
For $d \ge 3$, all non-complete graphs have a flexible injective $2$-dimensional realization
by placing two non-adjacent vertices arbitrarily and all other vertices on a line,
see for instance~\cite[Section~6]{GLS2019}.
The $2$-dimensional case, which is non-trivial, has been settled in~\cite{GLS2019}
using the following notion, see also \Cref{fig:3prism}.
\begin{definition}
	Let $\delta : E(G) \to \{\red, \blue\}$ be an edge coloring of a graph $G$.
	A cycle in $G$ is an \emph{almost red} cycle, if all edges of the
	cycle are red except one that is blue, and analogously for an \emph{almost blue} cycle.
	The coloring $\delta$ is called a \emph{NAC-coloring} if it is surjective
	and has no almost red nor almost blue cycles.
	Let $\nac{G}$ denote the set of all NAC-colorings of~$G$.
\end{definition}
\begin{theorem}[\cite{GLS2019}]
	\label{thm:NACmain}
	A connected graph $G$
	has a flexible quasi-injective $2$-dimensional realization if and only if
	it has a NAC-coloring.
\end{theorem}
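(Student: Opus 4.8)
The plan is to prove the two implications separately. For the forward direction I will build an explicit flex out of a given NAC-coloring, and for the converse I will recover a NAC-coloring from a flex by complexifying the length equations and reading off the colors from a valuation on an algebraic curve traced by the motion. Suppose first that $G$ has a NAC-coloring $\delta$. For each vertex $v$ let $\rho(v)$ denote the connected component of $v$ in the subgraph spanned by the \red\ edges and $\beta(v)$ the component of $v$ in the subgraph spanned by the \blue\ edges. I choose pairwise distinct reals $a_i$, one per red component, and pairwise distinct reals $b_j$, one per blue component, and set, for a real parameter $t$,
\[ p_t(v) = a_{\rho(v)}\,(\cos t, \sin t) + b_{\beta(v)}\,(\cos t, -\sin t). \]
The first thing to check is that every edge length is constant in $t$: along a \red\ edge $uv$ the endpoints share a red component, so $p_t(u)-p_t(v) = (b_{\beta(u)}-b_{\beta(v)})(\cos t, -\sin t)$, whose length $|b_{\beta(u)}-b_{\beta(v)}|$ is independent of $t$, and symmetrically along a \blue\ edge the difference is $(a_{\rho(u)}-a_{\rho(v)})(\cos t, \sin t)$. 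Hence $\{p_t\}$ preserves all edge lengths.

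Next I verify that each $p_t$ is quasi-injective and that the flex is nontrivial. Quasi-injectivity is exactly where the NAC condition enters: if a \red\ edge $uv$ had both endpoints in the same blue component, a shortest \blue\ $u$--$v$ path together with $uv$ would form an almost blue cycle, contradicting that $\delta$ is a NAC-coloring; therefore $\beta(u)\neq\beta(v)$, so $b_{\beta(u)}\neq b_{\beta(v)}$ and $p_t(u)\neq p_t(v)$, and the \blue\ case is symmetric. For nontriviality, surjectivity of $\delta$ guarantees at least one \red\ and one \blue\ edge; their edge vectors point in directions of angle $-t$ and $+t$ respectively, so the angle between them equals $2t$ and varies with $t$. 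Since a rigid motion of the plane preserves the angle between any two edge vectors, $p_t$ is not congruent to $p_{t_0}$ for $t\neq t_0$, and hence some pairwise distance changes. Taking $p := p_{t_0}$ for a small $t_0>0$ produces the desired flexible quasi-injective realization.

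For the converse, assume $(G,p)$ has a nontrivial flex and set $\lambda_{uv} = \|p(u)-p(v)\| > 0$. I complexify by writing $z_v = x_v + i\,y_v$ and $w_v = x_v - i\,y_v$, so that the squared length condition on an edge $uv$ becomes $(z_u-z_v)(w_u-w_v) = \lambda_{uv}^2$, a nonzero constant. A nontrivial flex yields an infinite family of pairwise non-congruent solutions, hence a positive-dimensional component of the (complexified) solution set; I pass to an algebraic curve $C$ inside it and to a discrete valuation, i.e.\ a place $\nu$ of the function field $\mathbb{C}(C)$. Writing $z_e := z_u - z_v$ and $w_e := w_u - w_v$ for an edge $e=uv$, the relation $z_e\,w_e = \lambda_{uv}^2$ forces $\nu(z_e) + \nu(w_e) = 0$. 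Because the flex is nontrivial, it is not contained in the orbit of the trivial motions (translations and rotations), under which every ratio $z_e/z_{e'}$ is invariant; so some such ratio is non-constant on $C$ and therefore has a zero or pole at some place $\nu$, at which the integers $\nu(z_e)$ are not all equal. I then fix a non-integer threshold $c$ strictly between $\min_e \nu(z_e)$ and $\max_e \nu(z_e)$ and color $e$ \red\ if $\nu(z_e) > c$ and \blue\ if $\nu(z_e) < c$; this leaves no edge uncolored and is surjective. For any cycle, telescoping gives $\sum_e \pm z_e = 0$ and $\sum_e \pm w_e = 0$. An almost red cycle would have its unique \blue\ edge with strictly smallest $\nu(z_e)$ among the cycle's edges, so the ultrametric inequality would force $\sum_e \pm z_e$ to have finite valuation, contradicting that it vanishes; an almost blue cycle is excluded symmetrically via the $w$-equation. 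Thus the coloring is a NAC-coloring.

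The constructive direction is elementary, so I expect the real difficulty to lie in the converse. The main obstacle is the passage from a merely continuous real flex to a complex algebraic curve carrying a place with a discrete valuation, together with the precise translation of ``nontrivial flex'' into ``some translation- and rotation-invariant ratio $z_e/z_{e'}$ is non-constant on $C$''. This equivalence is what guarantees a place where the valuations $\nu(z_e)$ genuinely differ, and hence where both colors appear; getting that step right, rather than the cycle computations, is the crux.
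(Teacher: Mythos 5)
This theorem is not proved in the paper at all --- it is imported from the cited reference, and your argument is essentially a correct reconstruction of the proof given there: the same ``two rotating directions indexed by red/blue components'' construction for sufficiency (the NAC condition entering exactly where you use it, to rule out a degenerate edge via an almost monochromatic cycle), and the same complexification $z=x+iy$, $w=x-iy$ with a valuation on a curve of realizations, coloring edges by the sign of $\nu(z_e)$ relative to a threshold and killing almost monochromatic cycles by the ultrametric inequality applied to the telescoping sums $\sum\pm z_e=0$ and $\sum\pm w_e=0$. The one step you should tighten is the order of quantifiers in the curve selection: as written you first pass to ``an algebraic curve $C$ inside'' the positive-dimensional component and only afterwards invoke nontriviality to claim some ratio $z_e/z_{e'}$ is non-constant on $C$; but an arbitrary such curve could lie entirely inside a single congruence orbit (which is itself positive-dimensional), making every ratio constant on it. The fix is to argue first that the map $p\mapsto(z_e/z_{e'})_{e,e'}$ is non-constant along the flex (otherwise, as you note, all edge vectors rotate by a common unit factor and the flex is trivial for a connected graph), hence non-constant on some irreducible component of the Zariski closure of the flex, and only then choose $C$ inside that component through two points with distinct ratio values; with that reordering the rest of your valuation argument goes through.
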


\begin{figure}[ht]
	\centering
	\begin{tikzpicture}[rotate=0,scale=2]
		\node[vertex] (a) at (0,0) {};
		\node[vertex] (b) at (1,0) {};
		\node[vertex] (c) at (0.5,0.5) {};
		\node[vertex] (d) at (0,1.5) {};
		\node[vertex] (e) at (1,1.5) {};
		\node[vertex] (f) at (0.5,1) {};
		\draw[bedge] (a)edge(b) (b)edge(c) (c)edge(a) (d)edge(e) (e)edge(f) (f)edge(d) ;
		\draw[redge] (a)edge(d) (b)edge(e) (c)edge(f);
	\end{tikzpicture}
	\qquad
	\qquad
	\begin{tikzpicture}[rotate=0,scale=2]
		\node[vertex] (a) at (0,0) {};
		\node[vertex] (b) at (1,0) {};
		\node[vertex] (c) at (0.5,0.5) {};
		\node[vertex] (d) at (0,1) {};
		\node[vertex] (e) at (1,1) {};
		\node[vertex] (f) at (0.5,1.5) {};
		\draw[edge] (a)edge(b) (b)edge(c) (c)edge(a) (d)edge(e) (e)edge(f) (f)edge(d) (a)edge(d) (b)edge(e) (c)edge(f);
	\end{tikzpicture}
	\qquad
	\begin{tikzpicture}[rotate=0,scale=2]
		\node[vertex] (a) at (0,0) {};
		\node[vertex] (b) at (1,0) {};
		\node[vertex] (c) at (0.5,0.5) {};
		\node[vertex] (d) at ($(a) + (70:1cm)$) {};
		\node[vertex] (e) at ($(b) + (70:1cm)$) {};
		\node[vertex] (f) at ($(c) + (70:1cm)$) {};
		\draw[edge] (a)edge(b) (b)edge(c) (c)edge(a) (d)edge(e) (e)edge(f) (f)edge(d) (a)edge(d) (b)edge(e) (c)edge(f);
	\end{tikzpicture}
	\caption{The $3$-prism is generically $2$-rigid, but has flexible $2$-dimensional realizations (middle and right).
		It has a unique NAC-coloring modulo swapping colors (left).}
	\label{fig:3prism}
\end{figure}

While every disconnected graph obviously has a $2$-dimensional flexible injective realization,
it has a NAC-coloring if and only if it has at least two components which are not isolated vertices
or it has a component that has a NAC-coloring.
Therefore, the question whether a graph has a quasi-injective flexible realization
can be reduced to the problem
\begin{center}
	\NACex: \textit{Does a graph have a NAC-coloring?}
\end{center}

It was proven that \NACex{} for general graphs is
NP-complete by a reduction from the 3-SAT~\cite{np_complete}.
A related NP-complete problem is the existence of a stable cut,
where a~\emph{stable cut} $S$ of a graph $G$ is a vertex cut, i.e., $G - S$ is disconnected,
that is also stable (independent), i.e., no two vertices in $S$ are adjacent.
The relation is the following:
if a graph has a stable cut, it has also
a NAC-coloring~\cite[Theorem 4.4]{GLS2019}; but not the other way around,
since for instance the $3$-prism graph has no stable cut.
An immediate consequence is that if a graph has a vertex which is not in any 3-cycle,
then the graph has a NAC-coloring.
The existence of a stable cut
is NP-complete even on the class of graphs with maximum degree five~\cite{stable_cuts}
and on $n$-vertex graphs with at most $(2+\varepsilon)n$ edges~\cite{Le2003}.
We prove that \NACex{} is NP-complete for both of these classes as well.

By \Cref{thm:NACmain}, every connected $2$-flexible graph has a NAC-coloring.
Recently, the authors of~\cite{nac_minimally_rigid} showed that
every $2$-flexible graph actually has a stable cut, and it can be found in polynomial time.
They also proved that a minimally $2$-rigid graph $G$ has a NAC-coloring if and only if
$G$ is not a 2-tree (which can be checked in polynomial time).
Hence, for the class of graphs $G$ with $|E(G)|\leq 2|V(G)|-3$,
\NACex{} can be decided in~polynomial time,
since such a graph is either minimally $2$-rigid, or $2$-flexible.
Therefore, together with our results, the only cases for which the complexity of \NACex{}
remains unknown is for the class of graphs $G$ such that $|E(G)|\leq 2|V(G)|+c$, where $c\geq -2$ is a fixed constant.
In particular, it is open whether \NACex{} is NP-complete
on graphs of maximum degree four.

We also study the complexity of \NACex{} when parametrized by \emph{treewidth}
(which we recall in \Cref{def:tree_decomposition}).
Informally, a problem is \emph{fixed-parameter tractable} if there is a parameter $k$ (like treewidth or maximum degree)
and an algorithm whose running time is at most $f(k)\cdot n^c$ for some function $f$
not depending on the input size $n$ and a constant~$c$.
We show that \NACex{} is tractable when parametrized by treewidth.
We also give an explicit linear time algorithm counting the NAC-colorings of graphs with bounded treewidth.

NAC-colorings can be used also to formulate some sufficient or necessary conditions
on the existence of flexible injective realizations~\cite{GLSinjective},
to study all flexible injective realizations of some small rigid graphs~\cite{GLSclassification},
or to study flexibility of frameworks consisting of triangles and parallelograms~\cite{GL2024}.
\Cref{thm:NACmain} and some other results generalize also to infinite graphs~\cite{DLinfinite}.
In some of these cases, the knowledge of all NAC-colorings of a graph is needed.
Hence, one might want to compute all NAC-colorings besides checking the existence.

The package \flexrilog~\cite{flexrilog} can list all NAC-colorings of a (small) graph, but it is rather slow.
It exploits the following concept, which we use throughout the paper:
let $\triangle$ be a relation on the edge set of a graph $G$
such that $e_1 \triangle e_2$ if and only if there exists a 3-cycle $C$ in $G$
such that $e_1, e_2 \in E(C)$.
The graph $G$ is \emph{\trcon{}} if $E(G)$ is the only equivalence class
of the reflexive-transitive closure of $\triangle$.
A subgraph is a~\emph{\trcon{} component} if it is a maximal \trcon{} subgraph.
Since every 3-cycle has to be monochromatic in all NAC-colorings,
every \trcon{} component has to be monochromatic as well,
which gives a naive algorithm to compute all NAC-colorings used in \flexrilog.
In the second half of this paper, we focus on approaches that significantly improve this algorithm.
Our implementation is faster than \flexrilog{} by two orders of magnitude.

The paper is organized as follows:
the complexity results on \NACex{} in sparse graphs are presented in \Cref{sec:NP}.
We discuss fixed-parameter tractability when parametrized by treewidth in \Cref{sec:fpt}.
In \Cref{sec:alg} we propose various approaches how to compute NAC-colorings
faster than using the naive algorithm, and we compare them
by presenting benchmarks in \Cref{sec:benchmarks}.

\section{NP-completeness on sparse graphs}%
\label{sec:NP}

In this section we prove that \NACex{}, i.e., the question whether a NAC-coloring exists,
is NP-complete on the class of graphs with maximum degree five and also
on the class of graphs with average degree smaller than
$4+\varepsilon$ for any $\varepsilon > 0$.
We proceed by reduction from the 3-SAT problem,
namely, the question whether a boolean formula
in conjunctive normal form in which each clause contains exactly three literals is satisfiable.
The main idea of our proof is inspired by
the proof of NAC-coloring NP-completeness~\cite{np_complete}.
The maximum degree in~the constructed graph therein is linear in the number of variables.
We propose a different gadget construction that allows us to limit the maximum
degree in the constructed graph.

\begin{theorem}
	\label{theorem:nac-deg-5}
	The question whether a graph has a NAC-coloring is NP-complete
	on the class of graphs with maximum degree five.
\end{theorem}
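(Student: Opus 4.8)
The problem lies in NP irrespective of the degree bound: given a candidate coloring $\delta$, surjectivity is immediate to check, and $\delta$ has an almost red cycle if and only if some blue edge has both its endpoints in the same connected component of the red subgraph (and symmetrically for almost blue cycles). Both conditions can be tested by computing the connected components of the two monochromatic subgraphs, so verification runs in polynomial time and the problem restricted to maximum degree five remains in NP. Hardness I would obtain by a polynomial reduction from 3-SAT.

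The workhorse is the fact, already noted in the excerpt, that every $3$-cycle is monochromatic, hence every \trcon{} subgraph is monochromatic in any NAC-coloring. I would build all gadgets from \trcon{} ``blocks'', each of which therefore carries a single color that I read as a truth value, with $\red$ meaning \true{} and $\blue$ meaning \false{}. For each variable $x_i$ I introduce a block whose color encodes the assignment of $x_i$, together with a block for the negated literal tied to the first so that the two are forced to opposite colors. This is the first delicate point, since triangles only force \emph{equality} of colors; I would realize negation by a small gadget whose only NAC-admissible completions are exactly those giving its two output blocks different colors. For each clause $\ell_1 \vee \ell_2 \vee \ell_3$ I attach a clause gadget to the three literal blocks feeding it, engineered so that the gadget admits a NAC-compatible coloring if and only if not all three literal blocks are $\blue$; the all-false case is precisely the configuration that forces an almost monochromatic cycle through the gadget.

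The heart of the theorem, and the reason the construction must differ from the one in \cite{np_complete}, is the degree bound. A variable occurring in many clauses would, in a naive construction, yield a vertex of unbounded degree, and a clause gadget must reach three possibly distant literal blocks. I would avoid high-degree hubs by broadcasting each variable's color along a bounded-degree \trcon{} fan-out structure: a tree (or chain) of overlapping triangles, which is \trcon{} and hence monochromatic, so every leaf faithfully reproduces the variable's color while every internal vertex keeps bounded degree. By routing each literal occurrence to its own leaf and sizing every gadget so that triangles share edges frugally, I arrange that each vertex of the constructed graph has degree at most five. This accounting---ensuring that no vertex simultaneously participates in too many triangles, fan-out branches, and inter-gadget connections---is the main technical obstacle.

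Correctness I would prove in both directions. Given a satisfying assignment, I color each variable block by its truth value, propagate it through the fan-out blocks, pick for each clause a satisfied literal to complete the clause gadget, and verify that the result is surjective and free of almost monochromatic cycles. Conversely, from any NAC-coloring the \trcon{} blocks are monochromatic, the negation gadgets enforce consistent literal values, and each clause gadget being NAC-colored certifies a $\red$ literal in its clause, so reading off the variable-block colors yields a satisfying assignment. Since the construction is polynomial and caps the maximum degree at five, this establishes NP-completeness on graphs of maximum degree five. The recurring effort throughout is checking that no almost monochromatic cycle is accidentally created across gadget boundaries while simultaneously respecting the degree cap.
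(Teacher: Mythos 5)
Your overall strategy matches the paper's: a reduction from 3-SAT in which \trcon{} subgraphs act as monochromatic carriers of truth values, a bounded-degree broadcast structure built from overlapping triangles (the paper's ``trains'' of braced ladders) distributes each literal's colour to its occurrences, and a clause gadget closes a cycle through the three literal blocks that becomes almost monochromatic exactly when all three literals carry the ``false'' colour. The NP-membership argument via connected components of the two monochromatic subgraphs is likewise the one the paper relies on.

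The step that fails as stated is your negation gadget. You propose ``a small gadget whose only NAC-admissible completions are exactly those giving its two output blocks different colors,'' but no such local gadget can exist: colouring both output blocks and every other edge of the gadget with a single colour never creates an almost monochromatic cycle, so the all-equal completion is always admissible with respect to the cycle condition. Since swapping the two colours is a symmetry of NAC-colorings, the only available tool for breaking this tie is surjectivity, which is a global property of the colouring and does not localize to a gadget. The paper resolves this with two extra reference trains $t$ and $f$ and a global bootstrapping argument: a cycle $B_i$ through the blocks $x_i$, $\bar{x}_i$, $t$, $f$ forces $x_i\neq\bar{x}_i$ \emph{provided} $t\neq f$, and $t\neq f$ is derived by contradiction --- if $t$ and $f$ had the same colour, every pair $x_i,\bar{x}_i$ would be forced equal, surjectivity would produce some literal pair coloured opposite to $t$, and that pair's second cycle $A_i$ would then be almost monochromatic. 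Your construction needs this (or an equivalent global use of surjectivity) to be sound. The remaining omissions --- the explicit gadgets, the degree-$5$ accounting, and the verification that no almost cycle crosses gadget boundaries --- are the expected technical work, and your plan for them is consistent with what the paper carries out.
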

\begin{proof}
	Let $\phi$ be a formula with variables $x_{1}, \dots, x_{n}$
	and clauses $L_1, \dots, L_k$.
	Our goal is to construct a graph $G_\phi$
	of size $O(n+k)$ such that $\phi$ is satisfiable if and only if
	$G_\phi$ has a NAC-coloring.

	We exploit the fact that \trcon{} components
	are monochromatic in every NAC-coloring.
	In particular, every subgraph isomorphic to a ladder graph with diagonals is monochromatic.
	We call a ladder graph such that every 4-cycle has one diagonal a~\emph{braced ladder}.
	We build a 2-tree structure called a \emph{train},
	which is a ``horizontal'' braced ladder with other ``vertical'' braced ladders
	glued so that the maximum degree is five, see \Cref{fig:proof_trains}.
	A train can be extended arbitrarily to connect more braced ladders.

	\begin{figure}[h]
		\centering
		\begin{tikzpicture}[scale=2]
			\node[vertex] (11) at (0.5, 0.5) {};
			\node[vertex] (12) at (0.5, 1.0) {};
			\node[vertex] (21) at (1.0, 0.5) {};
			\node[vertex] (22) at (1.0, 1.0) {};
			\node[vertex] (23) at (1.0, 1.5) {};
			\node[vertex] (31) at (1.5, 0.5) {};
			\node[vertex] (32) at (1.5, 1.0) {};
			\node[vertex] (33) at (1.5, 1.5) {};
			\node[vertex] (41) at (2.0, 0.5) {};
			\node[vertex] (42) at (2.0, 1.0) {};
			\node[vertex] (51) at (2.5, 0.5) {};
			\node[vertex] (52) at (2.5, 1.0) {};
			\node[vertex] (53) at (2.5, 1.5) {};
			\node[vertex] (61) at (3.0, 0.5) {};
			\node[vertex] (62) at (3.0, 1.0) {};
			\node[vertex] (63) at (3.0, 1.5) {};
			\node[vertex] (71) at (3.5, 0.5) {};
			\node[vertex] (72) at (3.5, 1.0) {};

			% Horizontal
			\draw[edge] (11)edge(21) (21)edge(31) (31)edge(41) (41)edge(51) (51)edge(61) (61)edge(71);
			\draw[edge] (12)edge(22) (22)edge(32) (32)edge(42) (42)edge(52) (52)edge(62) (62)edge(72);
			% Vertical bars
			\draw[edge] (11)edge(12) (21)edge(22) (31)edge(32) (41)edge(42) (51)edge(52) (61)edge(62) (71)edge(72);
			% Diagonals
			\draw[edge] (12)edge(21) (22)edge(31) (31)edge(42) (42)edge(51) (51)edge(62) (61)edge(72);
			% Chimneys
			\draw[edge] (22)edge(23) (32)edge(33) (23)edge(33) (23)edge(32);
			\draw[edge] (52)edge(53) (62)edge(63) (53)edge(63) (52)edge(63);

			\node[] (11d) at (0.25, 0.5) {};
			\node[] (12d) at (0.25, 1.0) {};
			\node[] (71d) at (3.75, 0.5) {};
			\node[] (72d) at (3.75, 1.0) {};
			\draw[edge] (11d)edge(11) (12d)edge(12) (71d)edge(71) (72d)edge(72);
			\node[] (23d) at (1.0, 1.75) {};
			\node[] (33d) at (1.5, 1.75) {};
			\node[] (53d) at (2.5, 1.75) {};
			\node[] (63d) at (3.0, 1.75) {};
			\draw[edge] (23d)edge(23) (33d)edge(33) (53d)edge(53) (63d)edge(63);
		\end{tikzpicture}
		\qquad
		\begin{tikzpicture}[scale=2]
			\node[vertex] (11) at (0.5, 0.5) {};
			\node[vertex] (12) at (0.5, 1.0) {};
			\node[vertex] (21) at (1.0, 0.5) {};
			\node[vertex] (22) at (1.0, 1.0) {};
			\node[vertex] (31) at (1.5, 0.5) {};
			\node[vertex] (32) at (1.5, 1.0) {};
			\node[vertex] (41) at (2.0, 0.5) {};
			\node[vertex] (42) at (2.0, 1.0) {};

			% Horizontal
			\draw[edge] (11)edge(21) (21)edge(31) (31)edge(41);
			\draw[edge] (12)edge(22) (22)edge(32) (32)edge(42);
			% Vertical bars
			\draw[edge] (11)edge(12) (21)edge(22) (31)edge(32) (41)edge(42);
			% Diagonals
			\draw[edge] (12)edge(21) (22)edge(31) (31)edge(42);

			\node[] (11d) at (0.25, 0.5) {};
			\node[] (12d) at (0.25, 1.0) {};
			\node[] (41d) at (2.25, 0.5) {};
			\node[] (42d) at (2.25, 1.0) {};
			\draw[edge] (11d)edge(11) (12d)edge(12) (41d)edge(41) (42d)edge(42);

			\begin{scope}[yshift=-1cm,xshift=-0.25cm]
				\node[] (ar) at (1.5, 1.25) {$\downarrow$};
				\node[vertex] (11) at (0.5, 0.5) {};
				\node[vertex] (12) at (0.5, 1.0) {};
				\node[vertex] (21) at (1.0, 0.5) {};
				\node[vertex] (22) at (1.0, 1.0) {};
				\node[vertexSig] (31) at (1.5, 0.5) {};
				\node[vertexSig] (32) at (1.5, 1.0) {};
				\node[vertex] (41) at (2.0, 0.5) {};
				\node[vertex] (42) at (2.0, 1.0) {};
				\node[vertex] (51) at (2.5, 0.5) {};
				\node[vertex] (52) at (2.5, 1.0) {};

				% Horizontal
				\draw[edge] (11)edge(21) (21)edge(31) (41)edge(51);
				\draw[edge] (12)edge(22) (22)edge(32) (42)edge(52);
				% Vertical bars
				\draw[edge] (11)edge(12) (21)edge(22) (41)edge(42) (51)edge(52);
				% Diagonals
				\draw[edge] (12)edge(21) (22)edge(31) (41)edge(52);
				% Red
				\draw[yedge] (31)edge(42) (31)edge(32) (31)edge(41) (32)edge(42);

				\node[] (11d) at (0.25, 0.5) {};
				\node[] (12d) at (0.25, 1.0) {};
				\node[] (51d) at (2.75, 0.5) {};
				\node[] (52d) at (2.75, 1.0) {};
				\draw[edge] (11d)edge(11) (12d)edge(12) (51d)edge(51) (52d)edge(52);
			\end{scope}
		\end{tikzpicture}
		\caption{A train (left) is formed by gluing braced ladders so that
			the maximum degree is five. The right figure shows how it can be extended.}
		\label{fig:proof_trains}
	\end{figure}

	We label the edges of $G_\phi$ with literals $x_i, \bar{x}_i$,
	where the bar denotes negation, for $1 \le i \le n$ and with $t, f$ literals.
	The edges in one \trcon{} component have always the same label.
	We construct the graph so that the edges with the same label
	have the same color in every NAC-coloring:
	eventually, we choose blue edges to correspond to \true{} and red edges to \false{}.

	We take $2n+2$ trains, the edges of each labeled by one literal, to which we will
	link other gadgets using braced ladders.
	Note that an edge of a graph such that its endvertices have degrees at most three and four,
	can be glued to a train via a braced ladder so that the maximum degree is at most five.

	For each variable $x_i$, we create two gadgets:
	one with cycle $A_i$ in the center
	with the edges of the cycle linked using braced ladders
	to the trains $x_i, \bar{x}_i$ and $t$,
	and the other linked to the trains $x_i, \bar{x}_i, t$ and $f$
	according to \Cref{fig:proof_enforce_true_false}.

	\begin{figure}[h]
		\centering
		\begin{tikzpicture}[scale=2.5]
			\node[vertex]      (22) at (1.25, 1.00) {};
			\node[]           (d22) at (1.00, 1.00) {};
			\node[vertex]      (23) at (1.25, 1.50) {};
			\node[]           (d23) at (1.00, 1.50) {};
			\node[vertex]      (32) at (1.50, 1.00) {};
			\node[vertex]      (33) at (1.50, 1.50) {};
			\node[vertex]      (42) at (2.00, 1.00) {};
			\node[vertex]      (43) at (2.00, 1.50) {};
			\node[vertexSig]   (44) at (2.00, 1.75) {};
			\node[vertex]      (45) at (2.00, 2.25) {};
			\node[vertex]      (46) at (2.00, 2.50) {};
			\node[]           (d46) at (2.00, 2.75) {};
			\node[vertex]      (52) at (2.50, 1.00) {};
			\node[vertex]      (53) at (2.50, 1.50) {};
			\node[vertexSig]   (54) at (2.50, 1.75) {};
			\node[vertex]      (55) at (2.50, 2.25) {};
			\node[vertex]      (56) at (2.50, 2.50) {};
			\node[]           (d56) at (2.50, 2.75) {};
			\node[vertex]      (62) at (3.00, 1.00) {};
			\node[vertex]      (63) at (3.00, 1.50) {};
			\node[vertex]      (72) at (3.25, 1.00) {};
			\node[]           (d72) at (3.50, 1.00) {};
			\node[vertex]      (73) at (3.25, 1.50) {};
			\node[]           (d73) at (3.50, 1.50) {};
			\node[vertex] (special) at (2.25, 0.75) {};

			\node[] at (2.25, 1.5) {$A_i$};

			%%% Left part
			% Bridge to center
			\draw[edge] (32)edge(42) (33)edge(43) (32)edge(33) (32)edge(43);
			\draw[edge] (22)edge(32) (23)edge(33) (22)edge(23) (22)edge(33);
			% Center
			\draw[edge] (32)edge(special) (42)edge(special) (33)edge(44) (43)edge(44) (42)edge(43);
			%%% Decoration
			\draw[edge] (22)edge(d22) (23)edge(d23);
			\node[] at (1.0, 1.25) {$x_i$};
			\node[] at (2.125, 1.25) {$x_i$};

			%%% Right part
			% Bridge to center
			\draw[edge] (52)edge(62) (53)edge(63) (62)edge(63) (53)edge(62);
			\draw[edge] (62)edge(72) (63)edge(73) (72)edge(73) (63)edge(72);
			% Center
			\draw[edge] (62)edge(special) (52)edge(special) (63)edge(54) (53)edge(54) (52)edge(53);
			% Decoration
			\draw[edge] (72)edge(d72) (73)edge(d73);
			\node[] at (3.50,  1.25) {$\bar{x}_i$};
			\node[] at (2.375, 1.25) {$\bar{x}_i$};

			%%% Center piece
			% Center peace and the one above
			\draw[bedge] (44)edge(45) (54)edge(55) (44)edge(55);
			\draw[bedge] (45)edge(46) (55)edge(56) (45)edge(56);
			\draw[bedge] (44)edge(54) (45)edge(55) (46)edge(56);
			%%% Decoration
			\draw[bedge] (46)edge(d46) (56)edge(d56);
			\node[] at (2.25, 2.75)  {$t$};
			\node[] at (2.25, 1.875) {$t$};

			\begin{scope}[xshift=3.5cm]
				\node[vertex]    (13) at (0.75, 1.50) {};
				\node[]         (d13) at (0.50, 1.50) {};
				\node[vertex]    (14) at (0.75, 2.00) {};
				\node[]         (d14) at (0.50, 2.00) {};
				\node[vertex]    (23) at (1.00, 1.50) {};
				\node[vertex]    (24) at (1.00, 2.00) {};
				\node[vertex]    (31) at (1.50, 0.75) {};
				\node[]         (d31) at (1.50, 0.50) {};
				\node[vertex]    (32) at (1.50, 1.00) {};
				\node[vertexSig] (33) at (1.50, 1.50) {};
				\node[vertexSig] (34) at (1.50, 2.00) {};
				\node[vertex]    (35) at (1.50, 2.50) {};
				\node[vertex]    (36) at (1.50, 2.75) {};
				\node[]         (d36) at (1.50, 3.00) {};
				\node[vertex]    (41) at (2.00, 0.75) {};
				\node[]         (d41) at (2.00, 0.50) {};
				\node[vertex]    (42) at (2.00, 1.00) {};
				\node[vertexSig] (43) at (2.00, 1.50) {};
				\node[vertexSig] (44) at (2.00, 2.00) {};
				\node[vertex]    (45) at (2.00, 2.50) {};
				\node[vertex]    (46) at (2.00, 2.75) {};
				\node[]         (d46) at (2.00, 3.00) {};
				\node[vertex]    (53) at (2.50, 1.50) {};
				\node[vertex]    (54) at (2.50, 2.00) {};
				\node[vertex]    (63) at (2.75, 1.50) {};
				\node[]         (d63) at (3.00, 1.50) {};
				\node[vertex]    (64) at (2.75, 2.00) {};
				\node[]         (d64) at (3.00, 2.00) {};

				%%% Center
				\draw[edge] (33)edge(34);
				\draw[edge] (43)edge(44);
				\draw[bedge] (34)edge(44);
				\draw[redge] (33)edge(43);
				%%% Labels
				\node[] at (1.750, 1.750) {$B_i$};
				\node[] at (1.375, 1.700) {$x_i$};
				\node[] at (2.125, 1.800) {$\bar{x}_i$};
				\node[] at (1.700, 2.125) {$t$};
				\node[] at (1.800, 1.375) {$f$};

				%%% Left
				\draw[edge] (23)edge(33) (24)edge(34) (23)edge(24) (23)edge(34);
				\draw[edge] (13)edge(23) (14)edge(24) (13)edge(14) (13)edge(24);
				\draw[edge] (13)edge(d13) (14)edge(d14);
				\node[] at (0.50, 1.75) {$x_i$};

				%%% Right
				\draw[edge] (43)edge(53) (44)edge(54) (53)edge(54) (43)edge(54);
				\draw[edge] (53)edge(63) (54)edge(64) (63)edge(64) (53)edge(64);
				\draw[edge] (63)edge(d63) (64)edge(d64);
				\node[] at (3.00, 1.75) {$\bar{x}_i$};

				%%% Top
				\draw[bedge] (34)edge(35) (44)edge(45) (35)edge(45) (35)edge(44);
				\draw[bedge] (35)edge(36) (45)edge(46) (36)edge(46) (36)edge(45);
				\draw[bedge] (36)edge(d36) (46)edge(d46);
				\node[] at (1.75, 3.00) {$t$};

				%%% Bottom
				\draw[redge] (32)edge(33) (42)edge(43) (32)edge(42) (33)edge(42);
				\draw[redge] (31)edge(32) (41)edge(42) (31)edge(41) (32)edge(41);
				\draw[redge] (31)edge(d31) (41)edge(d41);
				\node[] at (1.75, 0.50) {$f$};

			\end{scope}
		\end{tikzpicture}
		\caption{The gadgets for every variable $x_i$.
			For all variables together they enforce that the trains $x_i$ and $\bar{x}_i$ have different colors.}%
		\label{fig:proof_enforce_true_false}
	\end{figure}

	For each clause $L_i$, we create a gadget indicated
	in \Cref{fig:proof_clause} with cycle $C_i$ in the center.
	Let $\hat{x}_{i,1}, \hat{x}_{i,2}, \hat{x}_{i,3}$
	be literals used in $L_i$,
	where $\hat{x}_{i,j}$ denotes either $x_{i,j}$ or $\bar{x}_{i,j}$
	depending on~$L_i$. We link the edge labeled $\hat{x}_{i,j}$
	to the appropriate literal trains.
	Since each of the 3-prism subgraphs has only one NAC-coloring up to swapping colors,
	all edges labeled with the same literal have the same color in every NAC-coloring.

	\begin{figure}[h]
		\centering
		\begin{tikzpicture}[scale=2.5]
			%%% Center
			\node[vertexSig] (35) at (1.5, 2.5) {};
			\node[vertexSig] (53) at (2.5, 1.5) {};
			\node[vertex]    (57) at (2.5, 3.5) {};
			\node[vertex]    (75) at (3.5, 2.5) {};
			\node[] at (2.5, 2.5) {$C_i$};

			%%%%%%%%%%%%%%%%%%%%%%%%%%%%%%%%%%%%%%%%%%%%%%%%%%%%%%%%%%%%%%%%%%%%%%%%%%%%
			%%% t
			%%%%%%%%%%%%%%%%%%%%%%%%%%%%%%%%%%%%%%%%%%%%%%%%%%%%%%%%%%%%%%%%%%%%%%%%%%%%
			\node[vertex] (13) at (0.5, 1.5) {};
			\node[vertex] (14) at (0.5, 2.0) {};
			\node[vertex] (23) at (1.0, 1.5) {};
			\node[vertex] (24) at (1.0, 2.0) {};
			\draw[bedge] (13)edge(23) (14)edge(24) (13)edge(14) (23)edge(24) (13)edge(24);
			\draw[bedge] (53)edge(35) (53)edge(23) (35)edge(24) (53)edge(24);
			%%%% Extensions
			\node[] (13d) at (0.25, 1.5 ) {};
			\node[] (14d) at (0.25, 2.0 ) {};
			\draw[bedge] (13)edge(13d) (14)edge(14d);
			%%%% Labels
			\node[] at (0.25, 1.75) {$t$};
			\node[] at (2.125, 2.125) {$t$};

			%%%%%%%%%%%%%%%%%%%%%%%%%%%%%%%%%%%%%%%%%%%%%%%%%%%%%%%%%%%%%%%%%%%%%%%%%%%%
			%%% x_1
			%%%%%%%%%%%%%%%%%%%%%%%%%%%%%%%%%%%%%%%%%%%%%%%%%%%%%%%%%%%%%%%%%%%%%%%%%%%%
			\node[vertex]    (06)   at (0.25, 3.0 ) {};
			\node[vertex]    (07)   at (0.25, 3.5 ) {};
			\node[vertexSig] (16)   at (0.5 , 3.0 ) {};
			\node[vertexSig] (17)   at (0.5 , 3.5 ) {};
			\node[vertexSig] (26)   at (1.25, 3.0 ) {};
			\node[vertexSig] (27)   at (1.25, 3.5 ) {};
			\node[vertex]    (36)   at (1.5 , 3.0 ) {};
			\node[vertex]    (37)   at (1.5 , 3.5 ) {};
			\node[vertex]    (46)   at (2.0 , 3.0 ) {};
			\node[vertexSig] (p1m1) at (1.0,  3.25) {}; % prism x_1 middle
			\node[vertexSig] (p1m2) at (0.75, 3.25) {};
			\node[vertex]    (p1t1) at (1.0,  2.75) {}; % prism x_1 true
			\node[vertex]    (p1t2) at (0.75, 2.75) {};
			%%%% Construction
			\draw[edge] (35)edge(46) (46)edge(57) (57)edge(37) (36)edge(37) (46)edge(37);
			\draw[edge] (37)edge(27) (36)edge(26) (37)edge(26) (35)edge(36) (46)edge(36) (27)edge(26);
			%%%% Prism
			\draw[bedge] (26)edge(16) (27)edge(17); % linking horizontal edges
			\draw[edge] (26)edge(p1m1) (27)edge(p1m1); % left triangle
			\draw[edge] (16)edge(p1m2) (17)edge(p1m2); % right triangle
			\draw[bedge] (p1m1)edge(p1m2) (p1t1)edge(p1t2);
			\draw[bedge] (p1m1)edge(p1t1) (p1m2)edge(p1t2) (p1m1)edge(p1t2);
			%%%% Train
			\draw[edge] (16)edge(17) (06)edge(07); % vertical
			\draw[edge] (16)edge(06) (17)edge(07); % horizontal
			\draw[edge] (16)edge(07); % diagonal
			%%%% Extensions
			\node[] (06d) at (0.0 , 3.0) {};
			\node[] (07d) at (0.0 , 3.5) {};
			\draw[edge] (07)edge(07d) (06)edge(06d);
			\node[] (p1t1d) at (1.0 , 2.5) {};
			\node[] (p1t2d) at (0.75, 2.5) {};
			\draw[bedge] (p1t1)edge(p1t1d) (p1t2)edge(p1t2d);
			%%%% Labels
			\node[] at (1.875, 2.625) {$\hat{x}_1$};
			\node[] at (2.375, 3.125) {$\hat{x}_1$};
			\node[] at (0.0  , 3.25 ) {$\hat{x}_1$};
			\node[] at (0.875, 2.625) {$t$};

			%%%%%%%%%%%%%%%%%%%%%%%%%%%%%%%%%%%%%%%%%%%%%%%%%%%%%%%%%%%%%%%%%%%%%%%%%%%%
			%%% x_2
			%%%%%%%%%%%%%%%%%%%%%%%%%%%%%%%%%%%%%%%%%%%%%%%%%%%%%%%%%%%%%%%%%%%%%%%%%%%%
			\node[vertex]    (66)   at (3.0 , 3.0 ) {};
			\node[vertex]    (76)   at (3.5 , 3.0 ) {};
			\node[vertex]    (77)   at (3.5 , 3.5 ) {};
			\node[vertexSig] (86)   at (3.75, 3.0 ) {};
			\node[vertexSig] (87)   at (3.75, 3.5 ) {};
			\node[vertexSig] (96)   at (4.5 , 3.0 ) {};
			\node[vertexSig] (97)   at (4.5 , 3.5 ) {};
			\node[vertex]    (A6)   at (4.75, 3.0 ) {};
			\node[vertex]    (A7)   at (4.75, 3.5 ) {};
			\node[vertexSig] (p2m1) at (4.0,  3.25) {}; % prism 2 middle
			\node[vertexSig] (p2m2) at (4.25, 3.25) {};
			\node[vertex]    (p2t1) at (4.0,  2.75) {}; % prism 2 true
			\node[vertex]    (p2t2) at (4.25, 2.75) {};
			%%%% Construction
			\draw[edge] (75)edge(66) (66)edge(57) (57)edge(77) (76)edge(77) (66)edge(77);
			\draw[edge] (77)edge(87) (76)edge(86) (77)edge(86) (75)edge(76) (66)edge(76) (87)edge(86);
			%%%% Prism
			\draw[bedge] (86)edge(96) (87)edge(97); % linking horizontal edges
			\draw[edge] (86)edge(p2m1) (87)edge(p2m1); % left triangle
			\draw[edge] (96)edge(p2m2) (97)edge(p2m2); % right triangle
			\draw[bedge] (p2m1)edge(p2m2) (p2t1)edge(p2t2);
			\draw[bedge] (p2m1)edge(p2t1) (p2m2)edge(p2t2) (p2m1)edge(p2t2);
			%%%% Train
			\draw[edge] (96)edge(97) (A6)edge(A7); % vertical
			\draw[edge] (96)edge(A6) (97)edge(A7); % horizontal
			\draw[edge] (96)edge(A7); % diagonal
			%%%% Extensions
			\node[] (A6d) at (5.0 , 3.0) {};
			\node[] (A7d) at (5.0 , 3.5) {};
			\draw[edge] (A7)edge(A7d) (A6)edge(A6d);
			\node[] (p2t1d) at (4.0 , 2.5) {};
			\node[] (p2t2d) at (4.25, 2.5) {};
			\draw[bedge] (p2t1)edge(p2t1d) (p2t2)edge(p2t2d);
			%%%% Labels
			\node[] at (2.625, 3.125) {$\hat{x}_2$};
			\node[] at (3.125, 2.625) {$\hat{x}_2$};
			\node[] at (5.00 , 3.25 ) {$\hat{x}_2$};
			\node[] at (4.125, 2.625) {$t$};

			%%%%%%%%%%%%%%%%%%%%%%%%%%%%%%%%%%%%%%%%%%%%%%%%%%%%%%%%%%%%%%%%%%%%%%%%%%%%
			%%% x_3
			%%%%%%%%%%%%%%%%%%%%%%%%%%%%%%%%%%%%%%%%%%%%%%%%%%%%%%%%%%%%%%%%%%%%%%%%%%%%
			\node[vertex]    (64)   at (3.0 , 2.0 ) {};
			\node[vertex]    (74)   at (3.5 , 2.0 ) {};
			\node[vertex]    (73)   at (3.5 , 1.5 ) {};
			\node[vertexSig] (84)   at (3.75, 2.0 ) {};
			\node[vertexSig] (83)   at (3.75, 1.5 ) {};
			\node[vertexSig] (94)   at (4.5 , 2.0 ) {};
			\node[vertexSig] (93)   at (4.5 , 1.5 ) {};
			\node[vertex]    (A4)   at (4.75, 2.0 ) {};
			\node[vertex]    (A3)   at (4.75, 1.5 ) {};
			\node[vertexSig] (p3m1) at (4.0,  1.75) {}; % prism 3 middle
			\node[vertexSig] (p3m2) at (4.25, 1.75) {};
			\node[vertex]    (p3t1) at (4.0,  2.25) {}; % prism 3 true
			\node[vertex]    (p3t2) at (4.25, 2.25) {};
			%%%% Construction
			\draw[edge] (75)edge(64) (64)edge(53) (53)edge(73) (74)edge(73) (64)edge(73);
			\draw[edge] (73)edge(83) (74)edge(84) (73)edge(84) (75)edge(74) (64)edge(74) (83)edge(84);
			%%%% Prism
			\draw[bedge] (84)edge(94) (83)edge(93); % linking horizontal edges
			\draw[edge] (84)edge(p3m1) (83)edge(p3m1); % left triangle
			\draw[edge] (94)edge(p3m2) (93)edge(p3m2); % right triangle
			\draw[bedge] (p3m1)edge(p3m2) (p3t1)edge(p3t2);
			\draw[bedge] (p3m1)edge(p3t1) (p3m2)edge(p3t2) (p3m1)edge(p3t2);
			%%%% Train
			\draw[edge] (94)edge(93) (A4)edge(A3); % vertical
			\draw[edge] (94)edge(A4) (93)edge(A3); % horizontal
			\draw[edge] (94)edge(A3); % diagonal
			%%%% Extensions
			\node[] (A4d) at (5.0 , 2.0) {};
			\node[] (A3d) at (5.0 , 1.5) {};
			\draw[edge] (A3)edge(A3d) (A4)edge(A4d);
			\node[] (p3t1d) at (4.0 , 2.5) {};
			\node[] (p3t2d) at (4.25, 2.5) {};
			\draw[bedge] (p3t1)edge(p3t1d) (p3t2)edge(p3t2d);
			%%%% Labels
			\node[] at (3.125, 2.375) {$\hat{x}_3$};
			\node[] at (2.625, 1.875) {$\hat{x}_3$};
			\node[] at (5.00 , 1.75 ) {$\hat{x}_3$};
			\node[] at (4.125, 2.375) {$t$};

		\end{tikzpicture}
		\caption{The gadget for the clause
			$(\hat{x}_{i,1} \lor \hat{x}_{i,2} \lor \hat{x}_{i,3})$, index $i$ omitted in the labels.}%
		\label{fig:proof_clause}
	\end{figure}

	Note that for the variable gadgets we add
	a fixed number of vertices and edges bounded by $O(n)$ and
	for the clause gadgets the number is bounded by $O(k)$.
	Therefore, the whole graph size is bounded by $O(n+k)$
	and the graph can be constructed in polynomial time.
	Also, note that the maximum degree is five.

	We prove that the graph $G_\phi$ has a NAC-coloring if and only if
	$\phi$ is satisfiable.
	First, suppose we have a NAC-coloring $\delta$ of $G_\phi$.
	Let the train $t$ be blue.
	We derive some properties of the NAC-coloring from the graph.
	We prove that the trains $x_i$ and~$\bar{x}_i$
	are colored with different colors for every $i$
	and the train $f$ is red.

	Assume for contradiction that the train $f$ is blue.
	Then the trains $x_i$ and $\bar{x}_i$ have the same color for all $i$,
	otherwise $B_i$ would form an almost cycle.
	Since every edge is labeled by a literal and NAC-coloring is surjective,
	there is literal $x_j$ such that the trains~$x_j$ and~$\bar{x}_j$ are red.
	But then the cycle $A_j$ is an almost cycle, which is a contradiction.
	Hence, the train $f$ is red.
	From the cycles $B_i$ we also see
	that trains $x_i$ and $\bar{x}_i$ have to be colored with different colors
	for every $i$.

	Now we create the related truth assignment.
	For each variable $x_i$ we assign \true{} if
	the train $x_i$ is blue,
	otherwise \false{}.
	Each clause $L_i$ is satisfied since
	in every cycle $C_i$, at least one of
	the literals $\hat{x}_{i,j}$ corresponds to blue colored
	edges, otherwise an almost red cycle is formed.
	Therefore, a truth assignment for $\phi$ can be obtained
	from a NAC-coloring of~$G_\phi$ in polynomial time.

	Now we prove that for every truth assignment such that $\phi$ evaluates to \true{}, there exist
	a NAC-coloring of $G_\phi$. We define an edge coloring
	$\delta: E(G_\phi) \to \{\red, \blue\}$ as follows:
	the edges labeled with $t$ and $f$ are blue and red respectively,
	and an edge labeled by literals $x_i$, resp.\ $\bar{x}_i$, is blue
	if $x_i$, resp.\ $\bar{x}_i$, evaluates to \true{} in the truth assignment, and red otherwise.
	Since $t$ and $f$ have different colors,
	the coloring $\delta$ is surjective.

	Suppose there is an almost cycle $C$.
	Let $e=uv$ be the edge of the almost cycle $C$ that has the opposite color
	than all the other edges of $C$.
	The vertices $u$ and $v$ must be contained in multiple \trcon{} components
	since these are monochromatic.
	In the gadgets, all such possibilities for $e$ are indicated by edges with yellow endvertices.
	The edge~$e$ cannot be in any train since there are no two adjacent vertices that are also
	in~some other \trcon{} component.

	Now we use the fact that both $u$ and $v$ must be incident to edges of both colors.
	The gadget in \Cref{fig:proof_enforce_true_false} with cycle $A_i$
	cannot contain $e$ since exactly one of the two yellow vertices is incident only to blue edges
	as either $x_i$ or $\bar{x}_i$ are colored blue.
	The edge $e$ is not in~the gadget with cycle $B_i$ either,
	since there is a pair of opposite vertices of the cycle $B_i$
	such that one vertex is incident only to blue edges and the other one only to red ones
	as $x_i$ is blue and $\bar{x}_i$ is red, or the other way around.

	For the third gadget in \Cref{fig:proof_clause},
	suppose first that $e$ is the edge in cycle $C_i$ labeled $t$.
	Since it is blue, the edges labeled $\hat{x}_{i,1}$ and $\hat{x}_{i,3}$ are red.
	The edges labeled $\hat{x}_{i,2}$ are blue since the $i$-th  clause evaluates to \true{}.
	Hence, neither $C_i$ nor any other cycle inside the gadget is an almost red cycle.
	Since every cycle containing $e$ that does not lie entirely in the gadget
	has to pass through some of the 3-prism subgraphs in the gadget, it contains another blue edge labeled by $t$.
	An analogous argument applies also for $e$ being any other edge in the gadget labeled by $t$.
	It remains to consider the case when $e$ is one of the edges in~a~3-cycle of the 3-prism subgraphs.
	But in this case $C$ cannot be an almost cycle either since both triangles in each 3-prism are colored the same.
\end{proof}

Let us present a case in which 3-prisms are needed in the construction.
Let $\phi = (A \lor B \lor C) \land (A \lor D \lor B)$.
For~each satisfiable truth assignment, there is a~NAC-coloring in~$G_\phi$.
This formula is satisfiable, for~example, if~$C$ and~$D$ are assigned~$\true$
and all the~other literals are assigned~$\false$.
If~we create the corresponding $\red$-$\blue$-coloring in~$G_\phi$
where the~clause gadgets do not have the 3-prisms, an~almost cycle is created:
the~cycle starts at~$A$ in~the~first clause, goes through connecting
train to the $A$ section in~the~second clause. There, it~uses the~$\blue$
$\true$ edge and goes to~$B$. Using a~connecting train,
it~goes back to the~$B$ segment in~the~first clause.
Here, it~joins back to~$A$ as the~segments share a~vertex.
All edges corresponding to~$A$ and~$B$
are $\red$, and we~used only a~single $\blue$ edge.
Therefore, an~almost cycle exists, and the~coloring is not a~NAC-coloring.
If~the~prisms are used, four more blue edges are visited in an~analogous cycle.

\begin{theorem}
	\label{theorem:nac-eps}
	For every $\varepsilon>0$,
	the question whether a NAC-coloring exists is NP-complete for the class of graphs $G$
	with $|E(G)| \leq (2 + \varepsilon) |V(G)|$.
\end{theorem}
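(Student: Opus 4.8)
\section*{Proof proposal for \Cref{theorem:nac-eps}}

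The problem is in NP since a NAC-coloring is a polynomially verifiable certificate: surjectivity is immediate, and the absence of an almost red (resp.\ almost blue) cycle is equivalent to the requirement that the endpoints of every blue (resp.\ red) edge lie in distinct connected components of the subgraph formed by the red (resp.\ blue) edges, which can be checked in polynomial time. For NP-hardness I would reduce from the very instances produced in the proof of \Cref{theorem:nac-deg-5}. Given a 3-SAT formula $\phi$, let $G_\phi$ be the graph constructed there; it has maximum degree five, hence $|E(G_\phi)| \le \tfrac{5}{2}|V(G_\phi)|$, and it has a NAC-coloring if and only if $\phi$ is satisfiable. If $\varepsilon \ge \tfrac12$, then already $|E(G_\phi)| \le (2+\varepsilon)|V(G_\phi)|$ and $G_\phi$ is itself the desired instance, so I may assume $\varepsilon < \tfrac12$. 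The plan is to lower the edge-to-vertex ratio below $2+\varepsilon$ by attaching a long, density-$2$ piece that provably cannot change whether a NAC-coloring exists.

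Concretely, I would prolong one of the trains of $G_\phi$ (say the train~$t$) by appending many further braced-ladder rungs at a free end; call the appended part~$L$, which meets the rest of $G_\phi$ only in the two endpoints $a,b$ of a single shared rung. Because braced ladders are \trcon{}, the union of $L$ with the train forms a single \trcon{} component, so $L$ is monochromatic and receives the colour of the rung $ab$ in every NAC-coloring. Each appended rung contributes two new vertices and four new edges, so the attached part has edge-to-vertex ratio exactly~$2$. Writing $V_0=|V(G_\phi)|$, $E_0=|E(G_\phi)|$ and letting $a$ denote the number of added vertices (hence $2a$ added edges), the inequality $E_0+2a \le (2+\varepsilon)(V_0+a)$ is equivalent to $a \ge (E_0-(2+\varepsilon)V_0)/\varepsilon$; since the right-hand side is at most $(\tfrac12-\varepsilon)V_0/\varepsilon = O(V_0)$, a number of added rungs linear in $V_0$ (for fixed $\varepsilon$) suffices. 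The resulting graph $G'_\phi$ then has size $O(n+k)$, satisfies $|E(G'_\phi)| \le (2+\varepsilon)|V(G'_\phi)|$, and is constructible in polynomial time.

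It remains to show that $G'_\phi$ has a NAC-coloring if and only if $G_\phi$ does, which together with \Cref{theorem:nac-deg-5} completes the reduction. For the forward direction, I extend a NAC-coloring $\delta$ of $G_\phi$ to $L$ by giving every edge of $L$ the colour of the shared rung $ab$; this is the forced extension and keeps $L$ monochromatic. For the converse, I restrict a NAC-coloring $\delta'$ of $G'_\phi$ to $G_\phi$: the restriction has no almost cycle because every cycle of $G_\phi$ is a cycle of $G'_\phi$, and it must be surjective, since otherwise $G_\phi$ would be monochromatic, forcing the \trcon{} component~$L$ and hence all of $G'_\phi$ to that same colour, contradicting surjectivity of $\delta'$; thus the restriction is a NAC-coloring of $G_\phi$. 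The one point that requires care, and which I expect to be the main obstacle, is verifying that the extension introduces \emph{no new almost cycle}: because $L$ is monochromatic and meets $G_\phi$ only in $\{a,b\}$, any cycle of $G'_\phi$ using edges of $L$ traverses a monochromatic $a$--$b$ subpath of colour $\delta(ab)$; the unique minority-colour edge of a hypothetical almost cycle cannot lie on this subpath (a length-$\ge 2$ subpath of the minority colour would already supply two such edges), so replacing the subpath by the edge $ab$ of the same colour turns an almost cycle of $G'_\phi$ into one of $G_\phi$. Hence no almost cycle is created, establishing the equivalence.
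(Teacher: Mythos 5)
Your proposal is correct and follows essentially the same route as the paper: pad the graph $G_\phi$ from \Cref{theorem:nac-deg-5} by extending a braced ladder in a train with density-$2$ rungs ($2$ vertices, $4$ edges each), use the bound $|E(G_\phi)|\le\tfrac52|V(G_\phi)|$ from the degree bound to show that $O(|V(G_\phi)|)$ rungs suffice for fixed $\varepsilon$, and observe that the extension lies in a single \trcon{} (hence monochromatic) component, so NAC-colorings of the padded and original graphs are in bijection. Your write-up is somewhat more explicit than the paper's on the NP-membership check and on verifying that no new almost cycle arises, but the underlying argument is the same.
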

\begin{proof}
	Fix $\varepsilon>0$.
	By \Cref{theorem:nac-deg-5}, we can assume $\varepsilon<\frac{1}{2}$,
	since $|E(G)| \leq \frac{5}{2} |V(G)|$ if a graph $G$ has maximum degree five.
	The proof of the previous theorem applies once we show that
	we can create a graph~$G'_\phi$ from the graph $G_\phi$ constructed for a formula $\phi$
	such that $|E(G'_\phi)| \leq (2 + \varepsilon) |V(G'_\phi)|$, and
	the graph $G'_\phi$ has a NAC-coloring if and only if $G_\phi$ has a NAC-coloring.
	We take any braced ladder in a train of $G_\phi$
	and extend it $k$ times as indicated in \Cref{fig:proof_trains}.
	Hence, $|E(G'_\phi)| = |E(G_\phi)|+4k$ and $|V(G'_\phi)| = |V(G_\phi)|+2k$.
	Since we modify only a \trcon{} component,
	there is a bijection between NAC-colorings of $G'_\phi$ and $G_\phi$.
	Using the fact that the maximum degree of $G_\phi$ is five,
	we have for any integer $k > \frac{1-2\varepsilon}{4\varepsilon}|V(G_\phi)|$
	that
	\[
		\frac{|E(G_\phi')|}{|V(G_\phi')|}= \frac{|E(G_\phi)| + 4k}{|V(G_\phi)| + 2k}
		\leq \frac{\frac{5}{2}|V(G_\phi)| + 4k}{|V(G_\phi)| + 2k}
		= 2 + \frac{\frac{1}{2}|V(G_\phi)|}{|V(G_\phi)| + 2k}
		< 2+ \varepsilon\,. \qedhere
	\]
\end{proof}
\newpage

\section{Fixed-parameter tractability by treewidth}
\label{sec:fpt}

In this section we show in two different ways that \NACex{}
is fixed-parameter tractable when parameterized by treewidth
(see~\cite{paramAlg} for more on these notions).
The first one is a direct application of Courcelle's theorem~\cite{Courcelle},
while the second approach gives a finer asymptotic bound on the complexity of counting the number of NAC-colorings.

We start by recalling the definition of treewidth, see \Cref{fig:nice_tree_decomposition} for an example.
\begin{definition}%
	\label{def:tree_decomposition}
	A \emph{tree decomposition} of a graph~$G$ is
	a pair $(T, {\{X_t\}}_{t \in V ( T )})$,
	where $T$ is a tree and every node $t\in V(T)$
	is assigned a \emph{bag} $X_t \subseteq V(G)$ such that:
	\begin{enumerate}
		\item $\bigcup_{t \in V(T)} X_t = V(G)$,
		      i.e., each vertex of $G$ is in at least one bag,
		\item for every $uv \in E(G)$, there exists
		      a node $t \in T$ such that both $u, v \in X_t$, and
		\item for every $u \in V(G)$, the subgraph of $T$ induced by $\{t \in V(T) | u \in X_t\}$
		      is connected.
	\end{enumerate}
	The \emph{width} of the tree decomposition
	is $\max_{t\in V(T)} |X_t| - 1$.
	The \emph{treewidth} of~$G$ is the minimum such width
	across all tree decompositions of~$G$.
	We call the elements of~$V(T)$ \emph{nodes}
	and write $t \in T$ to abbreviate $t \in V(T)$.
\end{definition}

By Courcelle's theorem~\cite{Courcelle}, a problem is fixed-parameter tractable
when parametrized by treewidth if it can be expressed
in the monadic second-order logic (\MSO), see for instance~\cite[Section~7.4.1]{paramAlg}.
In \MSO{}, we can quantify over vertices, edges and subsets of vertices and edges.
Let $\cycle(C)$ be a formula encoding that the (sub)graph induced by edges $C$ is a cycle.
This is equivalent to being connected and the condition that every vertex has degree two,
which both can be expressed in \MSO, see for instance~\cite[Section~7.4.1]{paramAlg}.
Now, we can formalize the existence of a NAC-coloring of a graph $G=(V,E)$ as follows:
\begin{align*}
	(\exists E_r,E_b \subseteq E)\big(\partition(E_b,E_r) \land (\forall C\subseteq E)(\cycle(C) \implies \NACcond(C,E_b,E_r))\big),
\end{align*}
where $\partition(E_b,E_r)$ is defined by
\begin{align*}
	(\exists e_1,e_2 \in E)(e_1 \in E_b \land e_2\in E_r)\land(\forall e \in E)\big((e\in E_b \lor e\in E_r) \land (e\notin E_b \lor e\notin E_r)\big)\,,
\end{align*}
and $\NACcond(C,E_b,E_r)$ by
\begin{align*}
	C\subseteq E_b \lor C\subseteq E_r \lor (\exists e_1,e_2,e_3,e_4\in C)	(e_1\neq e_2 \land e_3 \neq e_4 \land e_1,e_2 \in E_b \land e_3,e_4\in E_r)\,.
\end{align*}
Hence, the existence of a NAC-coloring is fixed-parameter tractable by treewidth.

In order to get an explicit bound on the complexity in terms of treewidth, we describe an algorithm
taking on the input besides a graph also its special tree decomposition,
where we restrict how bags of neighboring nodes can differ and assume the tree is rooted.
\begin{definition}
	A tree decomposition $(T, {\{X_t\}}_{t \in V ( T )})$ rooted at~$r \in T$
	is \emph{nice} if $|X_r| = 1$, $|X_l| = 1$ for every leaf node $l \in T$
	and every non-leaf node is one of the following types:
	\begin{itemize}
		\item \IntroduceVertexNode{v} --- a node $t$ with one child $t'$
		      such that $X_t = X_{t'} \cup \{v\}$, where $v \not\in X_{t'}$.
		      We say that $v$ is \emph{introduced} by $t$.
		\item \ForgetVertexNode{v} --- a node $t$ with one child $t'$
		      such that $X_t = X_{t'} \setminus \{v\}$, where $v \in X_{t'}$.
		\item \JoinNode{} --- a node $t$ with two children $t_1, t_2$
		      such that $X_t = X_1 = X_2$.
	\end{itemize}
	For $t \in T$, we denote the set of vertices introduced by $t$
	and all its child nodes by $V_t$, and $G_t$ is the subgraph of $G$ induced by $V_t$.
\end{definition}
Note that contrary to the definition of nice tree decompositions in~\cite{paramAlg},
it is more convenient for us to consider the root and leaf vertices to contain a single vertex.
An~example of a nice tree decomposition is shown in~\Cref{fig:nice_tree_decomposition}.

\begin{figure}[ht]
	\begin{center}
		\begin{tikzpicture}
			\begin{scope}[xshift=-7cm, yshift=1cm]
				\begin{scope}[yshift=4cm, xshift=0.5cm, scale=2]
					\node[vertex,label={left:$1$}] (1) at (0,2) {};
					\node[vertex,label={right:$2$}] (2) at (1,2) {};
					\node[vertex,label={right:$3$}] (3) at (1,1) {};
					\node[vertex,label={right:$4$}] (4) at (1,0) {};
					\node[vertex,label={left:$5$}] (5) at (0,0) {};
					\node[vertex,label={left:$6$}] (6) at (0,1) {};
					\draw[edge] (1)--(2) (2)--(3) (1)--(3) (3)--(4) (4)--(5) (5)--(6) (6)--(1) ;
				\end{scope}

				\begin{scope}[scale=2]
					\node (A) at (0,0) {$\{1,2,3\}$};
					\node (B) at (0,1) {$\{1,3,6\}$};
					\node (C) at (1.5,1) {$\{3,4,6\}$};
					\node (D) at (1.5,0) {$\{4,5,6\}$};
					\draw[edge] (A)--(B) (B)--(C) (C)--(D);
				\end{scope}
			\end{scope}

			\begin{scope}[yscale=1.3]
				\begin{scope}
					\node[align=center] (intr1) at (0, 0) {$\{1\}$, \IntroduceVertexNode{1}};
					\node[align=center] (intr2) at (0, 1) {$\{1, 2\}$, \IntroduceVertexNode{2}};
					\node[align=center] (intr3) at (0, 2) {$\{1,2,3\}$, \IntroduceVertexNode{3}};
					\node[align=center] (forg2) at (0, 3) {$\{1,3\}$, \ForgetVertexNode{2}};
					\node[align=center] (intr6) at (0, 4) {$\{1,3,6\}$, \IntroduceVertexNode{6}};
					\node[align=center] (forg1) at (0, 5) {$\{3,6\}$, \ForgetVertexNode{1}};
				\end{scope}

				\begin{scope}[xshift=5cm]
					\node[align=center] (forg4) at (0, 5) {$\{3,6\}$, \ForgetVertexNode{4}};
					\node[align=center] (intr3b) at (0, 4) {$\{3,4,6\}$, \IntroduceVertexNode{3}};
					\node[align=center] (forg5) at (0, 3) {$\{4,6\}$, \ForgetVertexNode{5}};
					\node[align=center] (intr6b) at (0, 2) {$\{4,5,6\}$, \IntroduceVertexNode{6}};
					\node[align=center] (intr5) at (0, 1) {$\{4,5\}$, \IntroduceVertexNode{5}};
					\node[align=center] (intr4) at (0, 0) {$\{4\}$, \IntroduceVertexNode{4}};
				\end{scope}

				\begin{scope}[xshift=2.5cm,yshift=6cm]
					\node[align=center] (join) at (0,0) {$\{3,6\}$, \JoinNode};
					\node[align=center] (forg6) at (0,1) {$\{3\}$, \ForgetVertexNode{6}};
				\end{scope}
			\end{scope}

			\draw[diedge] (intr1)--(intr2);
			\draw[diedge] (intr2)--(intr3);
			\draw[diedge] (intr3)--(forg2);
			\draw[diedge] (forg2)--(intr6);
			\draw[diedge] (intr6)--(forg1);
			\draw[diedge] (forg1)--(join);
			\draw[diedge] (forg4)--(join);
			\draw[diedge] (intr3b)--(forg4);
			\draw[diedge] (forg5)--(intr3b);
			\draw[diedge] (intr6b)--(forg5);
			\draw[diedge] (intr5)--(intr6b);
			\draw[diedge] (intr4)--(intr5);
			\draw[diedge] (join)--(forg6);
		\end{tikzpicture}
	\end{center}
	\caption{
		A graph (top left) with a tree decomposition (bottom left)
		and a nice tree decomposition with indicated types of nodes (right), both of width two.
	}%
	\label{fig:nice_tree_decomposition}
\end{figure}

Any tree decomposition of width at~most $k$ can be converted to
a nice tree decomposition of width at~most $k$
in time $O(k^2 \cdot \max(|V(T)|, |V(G)|))$, see \cite[Lemma~7.4]{paramAlg}.
The nice decomposition tree has at~most $O(k|V(G)|)$ nodes.
The problem of finding treewidth is NP-complete~\cite{tree_width_np_complete},
but there exist efficient approximation algorithms~\cite{tree_width_approximation}.
We consider a nice tree decomposition as given, along with a~graph,
and do not consider runtime required to find it.

In the rest of this section, we prove this more refined statement.
\begin{theorem}
	\label{thm:treewidth}
	Given a nice tree decomposition $T$ of a graph $G$,
	the number of NAC-colorings of $G$ can be found in time
	$O(k^{5k} \cdot 4^{k^2} \cdot |V(G)|)$,
	where $k$ is the width of $T$.
\end{theorem}

For the following, we fix a nice tree decomposition $(T, {\{X_t\}}_{t \in V ( T )})$  of a graph $G$.
The idea is to compute the number of NAC-colorings of graphs $G_t$
sorted into groups according to how vertices in $X_t$ are connected by (almost) red or blue paths,
which is captured by \emph{states} defined below.
This allows us to compute the numbers in a bottom-up manner.

\begin{definition}
	\label{def:state}
	Let $t\in T$.
	A \emph{state} is formed by a pair $(H_\red, H_\blue)$
	of two graphs with edges of two possible types
	such that $V(H_\red) = V(H_\blue) = X_t$ and for both $H_\red$
	and~$H_\blue$, the following holds:
	\begin{itemize}
		\item each edge is either \emph{solid} or \emph{dashed},
		\item if vertices $u$ and $v$ are connected by a path consisting of solid edges,
		      then $uv$ is a solid edge, and
		\item if vertices $u$ and $v$ are connected by a path that contains exactly one dashed edge,
		      then $uv$ is a dashed edge.
	\end{itemize}
	All such states on $X_t$ form the \emph{state space} $S_t$.
\end{definition}

Besides NAC-colorings, we need to consider also monochromatic colorings
as they do not contain any almost monochromatic cycle.
\begin{definition}
	An edge coloring by red and blue is \emph{without almost cycles} if it is a~NAC-coloring,
	or all edges are blue, or all edges are red.
	For such a coloring, a path is called \emph{almost red} if exactly one edge
	is blue, similarly for \emph{almost blue}.
	We say that a path is \emph{almost monochromatic} if it is almost red or almost blue.
\end{definition}
Notice that a red edge forms an almost blue path.
Since no two vertices can be connected by a monochromatic and
almost monochromatic path of the same color simultaneously, the following holds.

\begin{remark}
	\label{rem:consistent_state}
	For $t \in T$ and a coloring without almost cycles $\gamma$ of $G_t$,
	there exists a~unique state $(H_\red, H_\blue) \in S_t$ such that
	\begin{itemize}
		\item for $u, v \in X_t$, there is solid edge $uv$ in $H_\red$, resp.\ $H_\blue$,
		      if and only if there is a $\red$, resp. $\blue$, path from $u$ to $v$ in $(G_t, \gamma)$, and
		\item for $u, v \in X_t$, there is dashed edge $uv$ in $H_\red$, resp.\ $H_\blue$,
		      if and only if there is an~almost $\red$, resp. $\blue$, path from $u$ to $v$ in $(G_t, \gamma)$.
	\end{itemize}
	This state is called \emph{consistent} with $\gamma$.
\end{remark}
Notice that if $X_t=\{v\}$, then all colorings without almost cycles of $G_t$ are consistent
with the only state in $S_t$, namely $\left((\{v\}, \emptyset), (\{v\}, \emptyset)\right)$.
All the introduced concepts including the following definition are illustrated in \Cref{fig:states_and_colorings}.

\begin{definition}
	Let $\mathcal{S} = \{ (t,s) \mid t \in T, s \in S_t \}$.
	We define \emph{cache function} $c: \mathcal{S} \to \NN_0$ to be the function that maps $(t, s)$
	to the number of colorings without almost cycles of~$G_t$ consistent with $s$.
\end{definition}

\begin{figure}[ht]
	\centering
	\begin{tikzpicture}[scale=0.4]
		\begin{scope}[xshift=-6cm,yshift=-3cm,scale=2.5]
			\node[vertex,label={left:$1$}] (1) at (0,2) {};
			\node[vertex,label={right:$2$}] (2) at (1,2) {};
			\node[vertex,label={right:$3$}] (3) at (1,1) {};
			\node[vertex,label={right:$4$}] (4) at (1,0) {};
			\node[vertex,label={left:$5$}] (5) at (0,0) {};
			\node[vertex,label={left:$6$}] (6) at (0,1) {};
			\draw[edge] (1)--(2) (2)--(3) (1)--(3) (3)--(4) (4)--(5) (5)--(6) (6)--(1) ;
		\end{scope}

		\state{0cm}{0cm}{3r, 3b, 4r, 4b, 6r, 6b}{4r/6r, 3r/4r, 3r/6r}{}{}{3b/4b}{1}
		\state{5.5cm}{0cm}{3r, 3b, 4r, 4b, 6r, 6b}{4r/6r}{3r/4r, 3r/6r}{3b/4b}{}{1}
		\state{11cm}{0cm}{3r, 3b, 4r, 4b, 6r, 6b}{3r/4r}{4r/6r, 3r/6r}{}{4b/6b, 3b/4b}{2}
		\state{16.5cm}{0cm}{3r, 3b, 4r, 4b, 6r, 6b}{}{4r/6r, 3r/4r}{3b/4b}{4b/6b, 3b/6b}{2}
		\state{22cm}{0cm}{3r, 3b, 4r, 4b, 6r, 6b}{3r/4r}{}{4b/6b}{3b/4b, 3b/6b}{1}
		\state{27.5cm}{0cm}{3r, 3b, 4r, 4b, 6r, 6b}{}{3r/4r}{4b/6b, 3b/4b, 3b/6b}{}{1}

		\begin{scope}[yshift=-3.5cm]
			\pathNAC{1cm}{3/4,4/5,5/6}{}
			\pathNAC{6.5cm}{4/5,5/6}{3/4}
			\pathNAC{11cm}{3/4,4/5}{5/6}
			\pathNAC{13cm}{3/4,5/6}{4/5}
			\pathNAC{16.5cm}{5/6}{3/4,4/5}
			\pathNAC{18.5cm}{4/5}{3/4,5/6}
			\pathNAC{23cm}{3/4}{4/5,5/6}
			\pathNAC{28.5cm}{}{3/4,4/5,5/6}
		\end{scope}
	\end{tikzpicture}
	\caption{Let $t$ be the \IntroduceVertexNode{3} node with bag $\{3,4,6\}$
		in the nice tree decomposition in \Cref{fig:nice_tree_decomposition}.
		The graph $G_t$ is the path $(3,4,5,6)$.
		The colorings without almost cycles of $G_t$ are displayed together
		with their consistent states in $S_t$.
		The value of the cache function is above the states,
		those whose cache function is zero are not shown.
		Instead of labeling the vertices in the bag in each state, we display them in black
		and keep the same position as in the graph on the left, similarly for the colorings without almost cycles.
		The small grey dots are only auxiliary to make the positions easily identifiable.
	}
	\label{fig:states_and_colorings}
\end{figure}

Since the number of all colorings without almost cycles in $G_t$ is
\[
	\sum_{s \in S_t} c[t, s]\,,
\]
and $G=G_r$, where $r$ is the root of $T$, the goal is to show how $c$
can be computed recursively, see \Cref{fig:fpt_example} for an illustration on an example.
The easiest situation is when there are no edges.

\begin{figure}
	\centering
	\begin{tikzpicture}[scale=0.4]
		\begin{scope}[xshift=1cm,yshift=-4cm,scale=2.5]
			\node[vertex,label={left:$1$}] (1) at (0,2) {};
			\node[vertex,label={right:$2$}] (2) at (1,2) {};
			\node[vertex,label={right:$3$}] (3) at (1,1) {};
			\node[vertex,label={right:$4$}] (4) at (1,0) {};
			\node[vertex,label={left:$5$}] (5) at (0,0) {};
			\node[vertex,label={left:$6$}] (6) at (0,1) {};
			\draw[edge] (1)--(2) (2)--(3) (1)--(3) (3)--(4) (4)--(5) (5)--(6) (6)--(1) ;
		\end{scope}

		\node (intr1) at (33, 1) {\IntroduceVertexNode{1}};

		\node (intr2) at (22, 1) {\IntroduceVertexNode{2}};
		\draw[diedge] (intr1)--(intr2);
		\state{8cm}{0cm}{2r, 2b, 1r, 1b}{2r/1r}{}{}{2b/1b}{1}
		\node[vertexw] (bottom1) at (bottom) {};
		\state{13cm}{0cm}{2r, 2b, 1r, 1b}{}{2r/1r}{2b/1b}{}{1}
		\node[vertexw] (bottom2) at (bottom) {};

		\node (intr3) at (22, -3.5) {\IntroduceVertexNode{3}};
		\draw[diedge] (intr2)--(intr3);
		\state{8cm}{-4.5cm}{1r, 1b, 2r, 2b, 3r, 3b}{1r/2r, 2r/3r, 1r/3r}{}{}{1b/2b, 2b/3b, 1b/3b}{1}
		\draw[arrow] (bottom1)--(top);
		\node[vertexw] (bottom1) at (bottom) {};
		\state{13cm}{-4.5cm}{1r, 1b, 2r, 2b, 3r, 3b}{}{1r/2r, 2r/3r, 1r/3r}{1b/2b, 2b/3b, 1b/3b}{}{1}
		\draw[arrow] (bottom2)--(top);
		\node[vertexw] (bottom2) at (bottom) {};

		\node (forg2) at (22, -8) {\ForgetVertexNode{2}};
		\draw[diedge] (intr3)--(forg2);
		\state{8cm}{-9cm}{1r, 1b, 3r, 3b}{1r/3r}{}{}{1b/3b}{1}
		\draw[arrow] (bottom1)--(top);
		\node[vertexw] (bottom1) at (bottom) {};
		\state{13cm}{-9cm}{1r, 1b, 3r, 3b}{}{1r/3r}{1b/3b}{}{1}
		\draw[arrow] (bottom2)--(top);
		\node[vertexw] (bottom2) at (bottom) {};

		\node (intr6) at (24, -12.5) {\IntroduceVertexNode{6}};
		\draw[diedge] (forg2)--(intr6);
		\state{0cm}{-13.5cm}{1r, 1b, 6r, 6b, 3r, 3b}{1r/6r, 6r/3r, 1r/3r}{}{}{1b/6b, 1b/3b}{1}
		\draw[arrow] (bottom1)--(top);
		\node[vertexw] (bottom1a) at (bottom) {};
		\state{5cm}{-13.5cm}{1r, 1b, 6r, 6b, 3r, 3b}{1r/3r}{1r/6r, 6r/3r}{1b/6b}{6b/3b, 1b/3b}{1}
		\draw[arrow] (bottom1)--(top);
		\node[vertexw] (bottom2a) at (bottom) {};
		\state{10cm}{-13.5cm}{1r, 1b, 6r, 6b, 3r, 3b}{1r/6r}{6r/3r, 1r/3r}{1b/3b}{1b/6b, 6b/3b}{1}
		\draw[arrow] (bottom2)--(top);
		\node[vertexw] (bottom3a) at (bottom) {};
		\state{15cm}{-13.5cm}{1r, 1b, 6r, 6b, 3r, 3b}{}{1r/6r, 1r/3r}{1b/6b, 6b/3b, 1b/3b}{}{1}
		\draw[arrow] (bottom2)--(top);
		\node[vertexw] (bottom4a) at (bottom) {};

		\node (forg1) at (24, -17) {\ForgetVertexNode{1}};
		\draw[diedge] (intr6)--(forg1);
		\state{5cm}{-18cm}{6r, 6b, 3r, 3b}{6r/3r}{}{}{}{1}
		\draw[arrow] (bottom1a)--(top);
		\node[vertexw] (bottom1) at (bottom) {};
		\state{10cm}{-18cm}{6r, 6b, 3r, 3b}{}{6r/3r}{}{6b/3b}{2}
		\draw[arrow] (bottom2a)--(top);
		\draw[arrow] (bottom3a)--(top);
		\node[vertexw] (bottom2) at (bottom) {};
		\state{15cm}{-18cm}{6r, 6b, 3r, 3b}{}{}{6b/3b}{}{1}
		\draw[arrow] (bottom4a)--(top);
		\node[vertexw] (bottom3) at (bottom) {};

		\node (join) at (33, -22) {\JoinNode};
		\draw[diedge] (forg1)--(join);
		\state{0cm}{-23cm}{3r, 3b, 6r, 6b}{3r/6r}{}{}{}{1\cdot 1}
		\draw[arrow] (bottom1)--(top);
		\node[vertexw] (bottom1a) at (bottom) {};
		\state{5cm}{-23cm}{3r, 3b, 6r, 6b}{3r/6r}{}{}{3b/6b}{1\cdot3}
		\draw[arrow] (bottom1)--(top);
		\node[vertexw] (bottom2a) at (bottom) {};
		\state{10cm}{-23cm}{3r, 3b, 6r, 6b}{3r/6r}{}{3b/6b}{}{1\cdot 1+1\cdot 1}
		\draw[arrow] (bottom1)--(top);
		\draw[arrow] (bottom3)--(top);
		\node[vertexw] (bottom3a) at (bottom) {};
		\state{15cm}{-23cm}{3r, 3b, 6r, 6b}{}{3r/6r}{}{3b/6b}{2\cdot 3 + 2\cdot 3}
		\draw[arrow] (bottom2)--(top);
		\node[vertexw] (bottom4a) at (bottom) {};
		\state{20cm}{-23cm}{3r, 3b, 6r, 6b}{}{3r/6r}{3b/6b}{}{1\cdot3}
		\draw[arrow] (bottom3)--(top);
		\node[vertexw] (bottom5a) at (bottom) {};
		\state{25cm}{-23cm}{3r, 3b, 6r, 6b}{}{}{3b/6b}{}{1\cdot 1}
		\draw[arrow] (bottom3)--(top);
		\node[vertexw] (bottom6a) at (bottom) {};

		\begin{scope}[yshift=-28cm]
			\node (forg4) at (33, 1) {\ForgetVertexNode{4}};
			\draw[diedge] (forg4)--(join);
			\state{5cm}{0cm}{3r, 3b, 6r, 6b}{3r/6r}{}{}{}{1}
			\draw[arrow] (top)--(bottom1a);
			\draw[arrow] (top)--(bottom3a);
			\node[vertexw] (bottom1) at (bottom) {};
			\state{10cm}{0cm}{3r, 3b, 6r, 6b}{}{3r/6r}{}{}{3}
			\draw[arrow] (top)--(bottom4a);
			\draw[arrow] (top)--(bottom5a);
			\node[vertexw] (bottom2) at (bottom) {};
			\state{15cm}{0cm}{3r, 3b, 6r, 6b}{}{}{}{3b/6b}{3}
			\draw[arrow] (top)--(bottom2a);
			\draw[arrow] (top)--(bottom4a);
			\node[vertexw] (bottom3) at (bottom) {};
			\state{20cm}{0cm}{3r, 3b, 6r, 6b}{}{}{3b/6b}{}{1}
			\draw[arrow] (top)--(bottom3a);
			\draw[arrow] (top)--(bottom6a);
			\node[vertexw] (bottom4) at (bottom) {};

			\node (intr3b) at (33, -3.5) {\IntroduceVertexNode{3}};
			\draw[diedge] (intr3b)--(forg4);
			\state{0cm}{-4.5cm}{3r, 3b, 4r, 4b, 6r, 6b}{4r/6r, 3r/4r, 3r/6r}{}{}{3b/4b}{1}
			\draw[arrow] (top)--(bottom1);
			\node[vertexw] (bottom1a) at (bottom) {};
			\state{5cm}{-4.5cm}{3r, 3b, 4r, 4b, 6r, 6b}{4r/6r}{3r/4r, 3r/6r}{3b/4b}{}{1}
			\draw[arrow] (top)--(bottom2);
			\node[vertexw] (bottom2a) at (bottom) {};
			\state{10cm}{-4.5cm}{3r, 3b, 4r, 4b, 6r, 6b}{3r/4r}{4r/6r, 3r/6r}{}{4b/6b, 3b/4b}{2}
			\draw[arrow] (top)--(bottom2);
			\node[vertexw] (bottom3a) at (bottom) {};
			\state{15cm}{-4.5cm}{3r, 3b, 4r, 4b, 6r, 6b}{}{4r/6r, 3r/4r}{3b/4b}{4b/6b, 3b/6b}{2}
			\draw[arrow] (top)--(bottom3);
			\node[vertexw] (bottom4a) at (bottom) {};
			\state{20cm}{-4.5cm}{3r, 3b, 4r, 4b, 6r, 6b}{3r/4r}{}{4b/6b}{3b/4b, 3b/6b}{1}
			\draw[arrow] (top)--(bottom3);
			\node[vertexw] (bottom5a) at (bottom) {};
			\state{25cm}{-4.5cm}{3r, 3b, 4r, 4b, 6r, 6b}{}{3r/4r}{4b/6b, 3b/4b, 3b/6b}{}{1}
			\draw[arrow] (top)--(bottom4);
			\node[vertexw] (bottom6a) at (bottom) {};

			\node (forg5) at (24, -8) {\ForgetVertexNode{5}};
			\draw[diedge] (forg5)--(intr3b);
			\state{5cm}{-9cm}{4r, 4b, 6r, 6b}{4r/6r}{}{}{}{1}
			\draw[arrow] (top)--(bottom1a);
			\draw[arrow] (top)--(bottom2a);
			\node[vertexw] (bottom1) at (bottom) {};
			\state{10cm}{-9cm}{4r, 4b, 6r, 6b}{}{4r/6r}{}{4b/6b}{2}
			\draw[arrow] (top)--(bottom3a);
			\draw[arrow] (top)--(bottom4a);
			\node[vertexw] (bottom2) at (bottom) {};
			\state{15cm}{-9cm}{4r, 4b, 6r, 6b}{}{}{4b/6b}{}{1}
			\draw[arrow] (top)--(bottom5a);
			\draw[arrow] (top)--(bottom6a);
			\node[vertexw] (bottom3) at (bottom) {};

			\node (intr6b) at (24, -12.5) {\IntroduceVertexNode{6}};
			\draw[diedge] (intr6b)--(forg5);
			\state{0cm}{-13.5cm}{5r, 5b, 4r, 4b, 6r, 6b}{5r/4r, 4r/6r, 5r/6r}{}{}{5b/4b, 5b/6b}{1}
			\draw[arrow] (top)--(bottom1);
			\node[vertexw] (bottom1a) at (bottom) {};
			\state{5cm}{-13.5cm}{5r, 5b, 4r, 4b, 6r, 6b}{5r/4r}{4r/6r, 5r/6r}{5b/6b}{5b/4b, 4b/6b}{1}
			\draw[arrow] (top)--(bottom2);
			\node[vertexw] (bottom2a) at (bottom) {};
			\state{10cm}{-13.5cm}{5r, 5b, 4r, 4b, 6r, 6b}{5r/6r}{5r/4r, 4r/6r}{5b/4b}{4b/6b, 5b/6b}{1}
			\draw[arrow] (top)--(bottom2);
			\node[vertexw] (bottom3a) at (bottom) {};
			\state{15cm}{-13.5cm}{5r, 5b, 4r, 4b, 6r, 6b}{}{5r/4r, 5r/6r}{5b/4b, 4b/6b, 5b/6b}{}{1}
			\draw[arrow] (top)--(bottom3);
			\node[vertexw] (bottom4a) at (bottom) {};

			\node (intr5) at (19, -17) {\IntroduceVertexNode{5}};
			\draw[diedge] (intr5)--(intr6b);
			\state{5cm}{-18cm}{4r, 4b, 5r, 5b}{4r/5r}{}{}{4b/5b}{1}
			\draw[arrow] (top)--(bottom1a);
			\draw[arrow] (top)--(bottom2a);
			\state{10cm}{-18cm}{4r, 4b, 5r, 5b}{}{4r/5r}{4b/5b}{}{1}
			\draw[arrow] (top)--(bottom3a);
			\draw[arrow] (top)--(bottom4a);

			\node (intr4) at (29, -17) {\IntroduceVertexNode{4}};
			\draw[diedge] (intr4)--(intr5);
		\end{scope}

		\begin{scope}[xshift=33cm,yshift=-9cm]
			\node (forg6) at (0,0) {\ForgetVertexNode{6}};
			\draw[diedge] (join)--(forg6);
			\state{-1.5cm}{1.5cm}{3r, 3b}{}{}{}{}{22}
		\end{scope}
	\end{tikzpicture}
	\caption{The computation of the cache function for the nice tree decomposition
		from \Cref{fig:nice_tree_decomposition} using the same convention as in \Cref{fig:states_and_colorings}.
		The value of the only state of the root node is~22,
		hence the graph has 20 NAC-colorings.
		The states consistent with the two monochromatic colorings are always the first one and the last one.
	}
	\label{fig:fpt_example}
\end{figure}
%%%%%%%%%%%%%%%%%%%%%%%%%%%%%%%%%%%%%%%%%%%%%%%%%%%%%%%%%%%%%%%%%%%%%%%%%%%%%%%%
% Leaf Node
%%%%%%%%%%%%%%%%%%%%%%%%%%%%%%%%%%%%%%%%%%%%%%%%%%%%%%%%%%%%%%%%%%%%%%%%%%%%%%%%
\begin{remark}
	\label{rem:fpt_leaf_node}
	If a node $t\in T$ is such that $G_t$ has no edge,
	then $c[t,s]=0$ for all $s\in S_t$.
	In particular, this is the case for the unique state of each leaf node.
\end{remark}

%%%%%%%%%%%%%%%%%%%%%%%%%%%%%%%%%%%%%%%%%%%%%%%%%%%%%%%%%%%%%%%%%%%%%%%%%%%%%%%%
% Forget Vertex Node
%%%%%%%%%%%%%%%%%%%%%%%%%%%%%%%%%%%%%%%%%%%%%%%%%%%%%%%%%%%%%%%%%%%%%%%%%%%%%%%%
\begin{lemma}%
	\label[lemma]{lemma:fpt_forget_vertex_node}
	Let $t \in T$ be a \ForgetVertexNode{v} node with the only child $t' \in T$.
	If $s \in S_t$, then
	\[
		c[t, s] =
		\sum_{\substack{(H'_\red, H'_\blue) \in S_{t'} \\ (H'_\red - v, H'_\blue - v) = s}}
		c[t', (H'_\red, H'_\blue)]\,.
	\]
\end{lemma}
\begin{proof}
	We have that $G_t = G_{t'}$, hence $G_t$ and $G_{t'}$ have the same set of colorings without almost cycles.
	From the construction of consistent states in \Cref{rem:consistent_state},
	a coloring without almost cycles $\delta$ is consistent with state $s\in S_t$
	if and only $\delta$ is consistent with state $(H'_\red, H'_\blue)\in S_{t'}$
	such that $(H'_\red - v, H'_\blue - v)=s$, the statement follows.
\end{proof}

\begin{definition}
	\label{def:closure}
	Given a graph $H$ such that each edge is either solid or dashed,
	its \emph{closure} is the multigraph with vertices $V(H)$ such that
	there is solid edge $uv$ if and only if $u$ and $v$ are connected by a solid path in $H$
	and there is dashed edge $uv$ if and only if $u$ and $v$ are connected by a path in $H$
	with exactly one dashed edge.
	The closure is \emph{valid} if it is actually a graph, namely, no two vertices
	are connected by a solid and dashed edge at the same time.
\end{definition}

\begin{lemma}%
	\label[lemma]{lemma:fpt_introduce_vertex_node}
	Let $t \in T$ be an \IntroduceVertexNode{v} node such that $G_t$ has an edge and
	$t' \in T$ be the only child of $t$.
	Let $E_v$ be the edges incident to~$v$ in $G_t$.
	Let $s = (H_\red, H_\blue) \in S_t$
	and $E_a^s$, resp.\ $E_a^d$, be the subset of edges of $E_v$ that are also solid,
	resp.\ dashed, edges in~$H_a$ for $a \in \{\red, \blue\}$.
	Suppose that
	\begin{equation}
		\label{eq:assumption_introduce_vertex}
		E_\red^s = E_\blue^d\,, \quad E_\red^d = E_\blue^s\,, \quad
		E_v \subseteq E(H_\red) \quad \text{and}
		\quad E_v \subseteq E(H_\blue)\,.
	\end{equation}
	If $G_{t'}$ has no edge, then $E_v \neq \emptyset$ and
	\[
		c[t, s] =
		\begin{cases}
			1, & \text{if $H_a$ is the closure of the graph with vertices $X_t$, solid edges $E_a^s$} \\
			   & \qquad \text{and dashed edges $E_a^d$ for both $a \in \{\red, \blue\}$}\,,           \\
			0, & \text{otherwise}.
		\end{cases}
	\]
	If $G_{t'}$ contains an edge, then
	\[
		c[t, s] = \sum c[t', (H'_\red, H'_\blue)]\,,
	\]
	where the sum\footnote{We use the common convention that the sum over the empty set is zero.}
	is over all states $(H'_\red, H'_\blue) \in S_{t'}$ such that
	for both $a \in \{\red, \blue\}$,
	$H_a$~is the valid closure of $H'_a$ with added vertex $v$, solid edges $E_a^s$
	and dashed edges $E_a^d$.
	If~\eqref{eq:assumption_introduce_vertex} does not hold,
	then $c[t, s] = 0$.
\end{lemma}
Notice that $E_v$ can be empty if $G_{t'}$ has an edge.
In this case, $c[t,s] \neq 0$ only if
$v$ is an isolated vertex in both $H_\red$ and $H_\blue$.
Then the sum is over a single state, namely $(H_\red - v, H_\blue -v ) \in S_{t'}$.
\begin{proof}
	Suppose that $s$ is consistent with a coloring without almost cycles of $G_t$.
	The fact we use throughout the proof is that
	$e \in E_v$ is red if and only if
	$e$ is a solid edge in $H_\red$ and dashed edge in $H_\blue$
	(as a red edge is a red path and also an almost blue path).
	Similarly, $e \in E_v$ is blue if and only if
	$e$ is a solid edge in $H_\blue$ and dashed edge in $H_\red$.
	Hence, if $c[t, s] \geq 1$, then~\eqref{eq:assumption_introduce_vertex} holds.

	Suppose that $E(G_{t'}) = \emptyset$.
	Since $G_t$ is assumed to have an edge, $E_v \neq \emptyset$.
	Then every edge coloring of $G_t$ by red and blue is a coloring without almost cycles since $G_t$ has no cycle.
	Suppose that $s$ is consistent with such a coloring without almost cycles.
	The red edges are all solid in $H_\red$ and dashed in $H_\blue$,
	while blue edges are all solid in $H_\blue$ and dashed in $H_\red$.
	Since $G_t$ with isolated vertices removed is a star graph,
	red, resp.\ blue, paths in $G_t$ correspond to solid paths in $H_\red$, resp.\ $H_\blue$,
	while almost red, resp.\ almost blue, paths in $G_t$ correspond to paths with exactly
	one dashed edge in $H_\red$, resp.\ $H_\blue$.
	Thus, $H_\red$ and $H_\blue$ are the stated closures.
	Each almost NAC-coloring is consistent with a different state as it determines
	the type of edges incident to $v$.

	Suppose that $E(G_{t'}) \neq \emptyset$.
	Let $s \in S_t$ be consistent with a coloring without almost cycles $\delta$ of~$G_t$.
	The colors of edges in $E_v$ are determined by $s$ by the fact above.
	The restriction $\delta'$ of $\delta$ to $G_{t'}$ is a coloring without almost cycles.
	Let $(H'_\red, H'_\blue) \in S_{t'}$ be the state consistent with $\delta'$.
	Since the paths avoiding $v$ are the same in $(G_t, \delta)$ and $(G_{t'}, \delta')$,
	$H'_\red$ is a subgraph (taking into account also the type of edges)
	of $H_\red$ and $H'_\blue$ is a subgraph of $H_\blue$.
	Since the solid edges in $E_a^s$ and dashed edges in $E_a^d$ give
	monochromatic and almost monochromatic paths respectively
	passing through $v$ for $a \in \{\red, \blue\}$,
	$H_a$ is the closure described in the statement.
	It is valid since in a~coloring without almost cycles, no two vertices are connected
	by a monochromatic and almost monochromatic path of the same color simultaneously.
	Thus, $c[t,s]$ is at most the stated sum.

	On the other hand, consider a coloring without almost cycles $\delta'$ of $G_{t'}$
	consistent with a state $(H'_\red, H'_\blue) \in S_{t'}$ such that
	$H_a$~is the valid closure of $H'_a$ with added vertex $v$, solid edges $E_a^s$
	and dashed edges $E_a^d$ for both $a \in \{\red, \blue\}$.
	Let $\delta$ be an edge coloring of $G_t$ by red and blue such that its restriction to $G_{t'}$ is $\delta'$
	and $\delta(e)$ is $a\in \{\red, \blue\}$ if $e\in E_a^s$,
	which is well-defined by~\eqref{eq:assumption_introduce_vertex}.
	W.l.o.g., suppose there is an almost red cycle in~$(G_t, \delta)$.
	Since $\delta'$ is a coloring without almost cycles, the cycle has to contain two vertices $w,w' \in X_{t'}$.
	But the corresponding red part of the cycle from $w$ to $w'$
	gives a~solid edge in the closure, while the almost red part gives a dashed edge,
	which is not possible as the closure is assumed to be valid.
	Hence, $\delta$ is a coloring without almost cycles and the opposite inequality for $c[t,s]$ holds as well.
\end{proof}

%%%%%%%%%%%%%%%%%%%%%%%%%%%%%%%%%%%%%%%%%%%%%%%%%%%%%%%%%%%%%%%%%%%%%%%%%%%%%%%%
% Join node
%%%%%%%%%%%%%%%%%%%%%%%%%%%%%%%%%%%%%%%%%%%%%%%%%%%%%%%%%%%%%%%%%%%%%%%%%%%%%%%%
\begin{lemma}%
	\label[lemma]{lemma:fpt_join_node}
	Let $t \in T$ be a \JoinNode{} node such that $G_t$ has an edge
	and let $t_1, t_2 \in T$ be the two children of $t$.
	Let $s = (H_\red, H_\blue) \in S_t$.
	If $G_{t_1}$, resp.\ $G_{t_2}$, has no edges,
	then $c[t, s] = c[t_2, s]$, resp.\ $c[t, s] = c[t_1, s]$.
	Otherwise,
	\begin{align*}
		c[t, s] & = \sum c[t_1, s_1] \cdot c[t_{2}, s_{2}]\,,
	\end{align*}
	where the sum is over all states $s_1 = (H_\red^1,H_\blue^1)\in S_{t_1}$
	and $s_2 = (H_\red^2,H_\blue^2)\in S_{t_2}$
	such that for both $a\in\{\red, \blue\}$,
	the common edges of $H_a^1$ and $H_a^2$ have the same type and
	$H_a$ is the valid closure of the graph with vertices $X_t$,
	solid edges being the union of the solid edges of $H_a^1$ and $H_a^2$ and
	dashed edges being the union of the dashed edges of $H_a^1$ and $H_a^2$.
\end{lemma}
\begin{proof}
	It holds that $X_t = X_{t_1} = X_{t_2}$,
	and thus $S_t = S_{t_1} = S_{t_2}$.
	Notice that the fact that the common edges of $H_a^1$ and $H_a^2$ have the same type
	guarantees we take the closure of a graph, not a multigraph.
	If $G_{t_1}$ has no edges, then $E(G_t)=E(G_{t_2})$ and the statement follows.

	Suppose that both $G_{t_1}$ and $G_{t_2}$ have an edge.
	Let $\delta$ be a coloring without almost cycles of~$G_t$ consistent with $s$.
	Let $\delta_1$ and $\delta_2$ be the restrictions of $\delta$
	to $G_{t_1}$ and $G_{t_2}$ consistent with
	$s_1 = (H_\red^1,H_\blue^1)\in S_{t_1}$ and
	$s_2 = (H_\red^2,H_\blue^2)\in S_{t_2}$ respectively.
	W.l.o.g., suppose there is a solid edge $uv$ in $H^1_\red$ that is a dashed edge in $H^2_\red$.
	But then there is a~red path from $u$ to $v$ in $(G_{t_1}, \delta_1)$ and
	an almost red path from $u$ to $v$ in $(G_{t_2}, \delta_2)$.
	They yield an~almost red cycle in $(G_t, \delta)$,
	which is not possible.
	Since each path in $(G_{t_1}, \delta_1)$ or $(G_{t_2}, \delta_2)$
	is the same in $(G_t, \delta)$,
	$H^1_a$ and $H^2_a$ are subgraphs (taking into account also the type of edges)
	of $H_a$ for $a \in \{\red, \blue\}$.
	Since concatenation of paths in $(G_{t_1}, \delta_1)$ and $(G_{t_2}, \delta_2)$
	corresponds to taking the closure, $s$ is the stated closure.
	Hence, considering all such colorings $\delta$ with restrictions $\delta_1$ and $\delta_2$,
	we have $c[t, s] \leq \sum c[t_1, s_1] \cdot c[t_2, s_2]$.

	For the opposite inequality, let $s_1 \in S_{t_1}$ and $s_2 \in S_{t_2}$
	satisfy the conditions in the statement.
	We can suppose that $c[t_1, s_1] \cdot c[t_{2}, s_{2}] \geq 1$.
	Hence, let $\delta_1$ and $\delta_2$
	be colorings without almost cycles of $G_{t_1}$ and $G_{t_2}$
	consistent with $s_1$ and $s_2$ respectively.
	Let $uv\in G_t$ for $u,v \in X_t$.
	Since $uv \in E(G_{t_1}) \cap E(G_{t_2})$,
	for $i\in\{1,2\}$, the edge $uv$ is either solid in $H_\red^i$ and dashed in $H_\blue^i$,
	or the other way around.
	Since the common edges of $H_a^1$ and $H_a^2$ have the same type,
	$\delta_1(uv)=\delta_2(uv)$.
	Hence, we can define an edge coloring $\delta$ of $G_t$
	by $\delta(e) = \delta_i(e)$ if $e \in E(G_{t_i})$.
	W.l.o.g., suppose there is an almost red cycle in $(G_t, \delta)$.
	Since $\delta_1$ and $\delta_2$ are colorings without almost cycles,
	the cycle has to contain two vertices $w,w' \in X_t$.
	But the corresponding red part of the cycle from $w$ to $w'$
	gives a solid edge in the closure, while the almost red part gives a dashed edge,
	which is not possible as the closure is assumed to be valid.
	Hence, $\delta$ is a coloring without almost cycles and
	$c[t, s] \geq \sum c[t_1, s_1] \cdot c[t_2, s_2]$.
	This concludes the statement.
\end{proof}

\begin{proof}[Proof of \Cref{thm:treewidth}]
	We can suppose that the graph $G$ has at least one edge, otherwise there is no NAC-coloring.
	We compute values of the cache function $c$ on the tree decomposition $T$ of width $k$
	in a bottom-up manner	by using recursive formulas from \Cref{rem:fpt_leaf_node} and
	\Cref{lemma:fpt_introduce_vertex_node,lemma:fpt_forget_vertex_node,lemma:fpt_join_node}.
	Note that it suffices to store only states that have a non-zero value of the cache function.
	Since $G = G_r$, where $r$ is the root node of $T$,
	and there are two monochromatic colorings as $G$ has an edge,
	the number of NAC-colorings of $G$ is $c[r, ((X_r,\emptyset),(X_r,\emptyset))] - 2$.

	We find an upper bound on the number of states in any node $t$ of $T$.
	Let $s=(H_\red, H_\blue)$ be a state in $S_t$ and $n=|X_t|=|V(H_\red)| = |V(H_\blue)|$.
	Let $\ell$ be the number of possible graphs $H_\red$ that satisfy the conditions of \Cref{def:state}.
	It follows that the connected components of the subgraphs of $H_\red$ obtained
	by removing all dashed edges are complete graphs.
	If there is a dashed edge between two such components with vertex sets $V_1$ and $V_2$,
	then there must be a dashed edge $v_1 v_2$ for all $v_1\in V_1$ and $v_2\in V_2$.
	Namely, $\ell$ is at most the number of graphs whose vertex set forms a partition of~$X_t$.
	The number of partitions of a set with $n$ elements is the Bell number,
	which can be upper bounded by $n^n$.
	Hence, $\ell \leq n^n \cdot 2^{\binom{n}{2}}$.
	Since the state consists of pairs $(H_\red, H_\blue)$ and $n\leq k+1$, we have
	\[
		|S_t|\leq \left(n^n \cdot 2^{\binom{n}{2}}\right)^2 \leq n^{2n} \cdot 2^{n^2}
		\leq (k+1)^{2(k+1)} \cdot 2^{(k+1)^2}\in O(f(k))\,,
	\]
	where $f(k) = k^{2k+2} \cdot 2^{k^2+2k}$.
	The closure according to \Cref{def:closure} can be computed in $O(k^4)$ operations.
	We assume the cache function lookups and stores have constant time complexity.
	We go through all the nodes and state complexity of the operations performed in a node:
	\begin{description}
		\item \IntroduceVertexNode{v}:
		      We use the notation of \Cref{lemma:fpt_introduce_vertex_node}.
		      Let $d$ be the degree of $v$ in~$G_t$.
		      Since $v$ can be adjacent only to vertices in $X_t$
					by the second property in \Cref{def:tree_decomposition}, $d$ is at most $k$.
		      If $G_{t'}$ has no edge, then there are exactly $2^d$ states with non-zero cache function.
		      If there is an edge in~$G_{t'}$, then for each state $s'$ of $t'$ and
		      for every possible combination of dashed/solid edges incident to $v$,
		      we compute the closure $s$ of $s'$ with these edges added,
		      and if it is valid, we increment $c[s,t]$ by $c[t', s']$
		      (or set it to this value if it has not been used yet).
		      So the time complexity is $O(k^4 \cdot 2^k \cdot f(k))$.
		\item \ForgetVertexNode{v}:
		      Using the notation of \Cref{lemma:fpt_forget_vertex_node},
		      it suffices to traverse all states $s'$ of~$t'$ such that $c[t',s']\neq 0$
		      and increment the value of the cache function of the state obtained
		      by removing the vertex $v$.
		      This can be done in~$O(k)$ time
		      per state as there are at most $k$ dashed or solid edges incident to $v$.
		      The total complexity therefore is $O(k\cdot f(k))$.
		\item \JoinNode:
		      We iterate through the product of state spaces of $t_1$ and $t_2$
		      (see \Cref{lemma:fpt_join_node}) and increment the value of the cache function
		      of the corresponding closure whenever it is valid.
		      So the number of closure computations is the square
		      of the state space size.
		      Thus, the complexity is $O(k^{4} \cdot f(k)^2)$.
	\end{description}
	There are $O(k|V(G)|)$ nodes in~$T$.
	The final complexity of the algorithm is therefore
	$O(k^{4k+9} 4^{k^2+2k} \cdot |V(G)|)$, which is upper bounded by the simpler formula in the statement.
\end{proof}

Instead of determining the number of NAC-colorings, one can ask only for the existence,
but require a NAC-coloring, if it exists, as a certificate.
This can be achieved by storing a coloring without almost cycles on $G_t$
for each $s \in S_t, t \in T$, where $c[t, s] > 0$,
while preferring colorings that are not monochromatic.

\section{Algorithms}
\label{sec:alg}

The goal of this section is to propose algorithms to find one or all NAC-colorings of a given graph.
After recalling the solution used in \flexrilog{},
we describe an improvement of the idea of \trcon{} components in \Cref{sec:NACvalid}
and checking whether a coloring is a NAC-coloring in \Cref{sec:small_cycles}.
In \Cref{sec:combining} we sketch the idea of algorithms that
exploit combining NAC-colorings of subgraphs.
We propose heuristics for decomposing into subgraphs and merging strategies
in \Cref{sec:decomposition,sec:merging} respectively.
Although the proposed approaches are written so that all NAC-colorings are generated,
the existence of a NAC-coloring can be checked easily by stopping after finding a first one.

A naive approach to list all NAC-colorings is to consider
all $2^{|E(G)|}$ surjective edge colorings of graph $G$ by red and blue
and check each of them in polynomial time using the following lemma.
We call this check \IsNACColoring{}$(G, E_r, E_b)$.

\begin{lemma}[{\cite[Lemma~2.4]{GLS2019}}]
	\label{lemma:is_NAC_coloring}
	Let $G$ be a graph and $\delta: E(G) \to \{\red, \blue\}$ be a~surjective edge coloring.
	Let $E_r$ and $E_b$ be the red and blue edges of $G$ respectively.
	The coloring~$\delta $ is a NAC-coloring if and only if
	the connected components of $G[E_r]$ and $G[E_b]$\footnote{
		By $G[F]$ for $F\subset E(G)$ we mean the edge-induced subgraph,
		namely, the subgraph of $G$ having the edges $F$ and the vertex set
		being the endvertices of the edges in $F$.}
	are induced subgraphs of $G$.
\end{lemma}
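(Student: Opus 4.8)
The plan is to prove both implications by directly translating between the combinatorial structure ``almost monochromatic cycle'' and the structure ``non-induced monochromatic component''. The single observation driving everything is that a connected component $H$ of the edge-induced subgraph $G[E_r]$, say with vertex set $W$, already contains \emph{every} red edge of $G$ both of whose endvertices lie in $W$; this is immediate from $H$ being a \emph{connected} component. Consequently $H$ fails to be an induced subgraph of $G$ precisely when there is a blue edge $uv$ with $u,v\in W$. The symmetric statement, with the two colors exchanged, holds for $G[E_b]$.

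First I would prove the contrapositive of the ``only if'' direction. Suppose some component $H$ of $G[E_r]$ with vertex set $W$ is not an induced subgraph of $G$. By the observation above there is a blue edge $e=uv$ with $u,v\in W$. Since $H$ is connected and $u,v$ are distinct vertices of $H$, there is a simple red path $P$ from $u$ to $v$ inside $H$. Then $P$ together with $e$ forms a cycle all of whose edges are red except the single blue edge $e$, that is, an almost red cycle, so $\delta$ is not a NAC-coloring. A non-induced component of $G[E_b]$ produces an almost blue cycle in exactly the same way.

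For the ``if'' direction I would again argue by contraposition. Assume $\delta$ is not a NAC-coloring. Since $\delta$ is surjective by hypothesis, the only possible cause is the existence of an almost red or an almost blue cycle. Say $C$ is an almost red cycle with unique blue edge $e=uv$. Deleting $e$ from $C$ leaves a red path from $u$ to $v$, so $u$ and $v$ lie in a common connected component $H$ of $G[E_r]$, with vertex set $W$. But $e=uv$ is a blue edge with both endvertices in $W$, so the vertex-induced subgraph on $W$ contains an edge not present in $H$, witnessing that $H$ is not an induced subgraph of $G$. An almost blue cycle yields a non-induced component of $G[E_b]$ symmetrically, completing the equivalence.

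The argument is elementary; the only points requiring care are the per-component equivalence itself, namely that a failure of inducedness can only be caused by an edge of the \emph{opposite} color (since same-colored edges between vertices of a component are automatically contained in it), and the routine reduction from an arbitrary red walk joining $u$ and $v$ to a simple red path, so that adjoining $e$ genuinely produces a cycle rather than merely a closed walk.
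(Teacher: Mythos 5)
Your proof is correct; the paper does not reprove this lemma but cites it from~\cite[Lemma~2.4]{GLS2019}, where the argument is essentially the same translation between almost-monochromatic cycles and non-induced monochromatic components that you give. The one point worth making explicit is that the red $u$--$v$ path has length at least two (the single edge $uv$ is blue, and the graph is simple), so adjoining $e$ really yields a cycle; your closing remark about reducing a walk to a simple path covers the rest.
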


The algorithm implemented in \flexrilog{}
considers only colorings such that \trcon{} components are monochromatic.

\subsection{NAC-valid relations}
\label{sec:NACvalid}

While the idea of \trcon{} components is that 3-cycles are monochromatic,
there are also other cases when two edges have to have the same color, which can further reduce the search space.
Namely, we aim to find a partition of the edge set such that each part is necessarily monochromatic
in every NAC-coloring and the number of parts is as small as possible.

\begin{definition}
	\label{def:NACvalid}
	An equivalence relation $\sim$ on the edge set
	of a graph $G$ is called \emph{NAC-valid}
	if for every NAC-coloring $\delta$ of $G$ it holds that
	\[
		\forall e_1, e_2 \in E(G) :
		e_1 \sim e_2 \Rightarrow \delta (e_{1}) = \delta (e_{2})\,.
	\]
	An equivalence class of a NAC-valid relation is called a \emph{NAC-mono(chromatic) class}.
	The \emph{vertices} of a NAC-mono class $M$ are the vertices of the subgraph $G[M]$.
\end{definition}

Clearly, the relation inducing \trcon{} components is NAC-valid.
The following lemma describes a way how to construct a new NAC-valid relation from another one
with possibly less NAC-mono classes.

\begin{lemma}%
	\label{lemma:two_edges_and_component}
	Let $\sim$ be a NAC-valid relation on $G$ and
	$\sim^\prime$ be a relation on $E(G)$ given by
	$e_{1} \sim^\prime e_{2}$ if and only if
	$e_{1} \sim e_{2}$ or there exists
	a cycle $C$ in $G$ such that $e_{1}, e_{2}$
	are edges in $C$
	and all other edges of $C$ are in the same NAC-mono class of $\sim$.
	Then the reflexive-transitive closure of $\sim^\prime$ is NAC-valid.
\end{lemma}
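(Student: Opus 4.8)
The plan is to show that the new relation $\sim'$ cannot force two edges into the same monochromatic class unless they genuinely must receive the same color in every NAC-coloring. Since NAC-validity is a condition that is preserved under taking reflexive-transitive closures (a chain $e_1 \sim' e_2 \sim' \dots \sim' e_m$ of forced equalities of color composes to a forced equality $\delta(e_1) = \delta(e_m)$), it suffices to verify the \emph{single-step} claim: for any NAC-coloring $\delta$ and any two edges $e_1, e_2$ with $e_1 \sim' e_2$, we have $\delta(e_1) = \delta(e_2)$. The case $e_1 \sim e_2$ is immediate from the assumption that $\sim$ is NAC-valid, so the real content is the other disjunct.

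So I would fix a NAC-coloring $\delta$ and a cycle $C$ in $G$ witnessing $e_1 \sim' e_2$: that is, $e_1, e_2 \in E(C)$ and every edge of $C$ other than $e_1, e_2$ lies in a single monochromatic class $M$ of $\sim$. First I note that because $\sim$ is NAC-valid, all the edges of $C \setminus \{e_1, e_2\}$ share one common color, say without loss of generality \blue{} (if $C$ has only the two edges $e_1, e_2$ then it is a $2$-cycle / multi-edge, which I would handle as a trivial or degenerate case, or rule out by assuming simple graphs; more generally if $C$ consists of exactly $e_1,e_2$ plus the class $M$, the argument below covers it). Now I argue by contradiction: suppose $\delta(e_1) \neq \delta(e_2)$. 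Then exactly one of $e_1, e_2$ is \red{} and the other \blue{}. In either subcase, the cycle $C$ has all its edges \blue{} except for exactly one \red{} edge — namely whichever of $e_1, e_2$ is red — so $C$ is an \emph{almost blue} cycle. This contradicts the assumption that $\delta$ is a NAC-coloring, which by definition has no almost blue (nor almost red) cycles. Hence $\delta(e_1) = \delta(e_2)$, establishing the single-step claim.

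Having the single-step claim, I would close the argument for the reflexive-transitive closure. Reflexivity and symmetry add no new color constraints, and for transitivity I would take a witnessing chain $e_1 = f_0 \mathrel{\sim'} f_1 \mathrel{\sim'} \cdots \mathrel{\sim'} f_m = e_2$ and apply the single-step claim to each consecutive pair to conclude $\delta(f_0) = \delta(f_1) = \cdots = \delta(f_m)$, hence $\delta(e_1) = \delta(e_2)$. Since $\delta$ was an arbitrary NAC-coloring, the reflexive-transitive closure of $\sim'$ satisfies the defining implication of \Cref{def:NACvalid} and is therefore NAC-valid.

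The step I expect to require the most care is the almost-cycle analysis in the middle paragraph: I must make sure that the cycle $C$ really becomes an almost blue (or almost red) cycle under the assumption $\delta(e_1)\neq\delta(e_2)$, which hinges on \emph{all} of $C$'s remaining edges lying in one monochromatic class $M$ and hence being genuinely monochromatic — this is exactly where the hypothesis ``all other edges are in the same monochromatic class of $\sim$'' is used, together with NAC-validity of $\sim$ to convert ``same class'' into ``same color.'' I would also be careful about the degenerate possibility that $C$ has no edges besides $e_1$ and $e_2$, or that $e_1$ and $e_2$ are not distinct, so that the ``almost cycle'' terminology still applies; assuming $G$ is simple and $C$ is a genuine cycle of length at least three makes this clean.
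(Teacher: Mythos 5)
Your proposal is correct and follows essentially the same argument as the paper: reduce to the single-step case via preservation of NAC-validity under reflexive-transitive closure, then observe that if $\delta(e_1)\neq\delta(e_2)$ the cycle $C$ would be an almost cycle because $C-\{e_1,e_2\}$ is monochromatic by NAC-validity of $\sim$. Your additional care about degenerate cycles is reasonable but not needed beyond what the paper assumes.
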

\begin{proof}
	The condition in \Cref{def:NACvalid} is preserved under taking reflexive-transitive closure,
	so it suffices to check it only for $\sim^\prime$: if $e_1$ and $e_2$ had different colors in some NAC-coloring,
	then $C$ would be an almost cycle since $C - \{e_1,e_2\}$ is monochromatic as the relation $\sim$ is NAC-valid.
\end{proof}

Note that various situations can occur: $e_1$ or $e_2$ is in the same NAC-mono class of~$\sim$ as the rest of $C$,
or none of them is.
If each NAC-mono class of $\sim$ induces a connected subgraph,
then the edges $e_1$ and $e_2$ can be supposed to be incident,
but this is not the case in general.
For instance, one could introduce a NAC-valid relation such that both triangles in a 3-prism subgraph
are in the same NAC-mono class, since they have the same color in all NAC-colorings of the 3-prism.

Notice that the application of \Cref{lemma:two_edges_and_component} corresponds
to merging the NAC-mono classes of $e_1$ and $e_2$.
We propose to use the following NAC-valid relation:
we start with the \trcon{} components and then apply the following two steps
as long as there is some change:
\begin{enumerate}
	\item If there is an edge $uv$ such that $u$ and $v$ are vertices of
	      the same NAC-mono class~$M$, then merge $M$ with the NAC-mono class of $uv$.
	\item If there are edges $uv$ and $vw$ such that $u$ and $w$ are vertices of
	      the same NAC-mono class, then merge the NAC-mono classes of $uv$ and $vw$.
\end{enumerate}
Every \trcon{} component induces a connected subgraph and also the two operations preserve
the fact that each NAC-mono class induces a connected subgraph.
Hence, the resulting partition indeed forms NAC-mono classes
of a NAC-valid relation by \Cref{lemma:two_edges_and_component},
we call them \emph{\trext{} classes}.
The construction can be done in polynomial time
and implemented efficiently using a Union-find data structure,
see also \Cref{alg:create_monochromatic_classes}.

\begin{algorithm}[ht]
	\caption{Create \trext{} classes}%
	\label{alg:create_monochromatic_classes}
	\begin{algorithmic}[1]
		\Require{} $G$

		\Ensure{} $P \gets ()$
		\Comment{} list of \trext{} classes

		\State{} $U \gets \Call{CreateUnionFind}{E(G)}$
		\Comment{} create Union-find data structure

		\ForAll{$e_{1},e_{2},e_{3} \in \Call{FindAllTriangles}{G}$}
		\State{} $U \gets \Call{Join}{U, e_{1}, e_{2}, e_{3}}$
		\EndFor{}

		\State{} $U^\prime \gets \emptyset$
		\While{$U^\prime \not= U$}
		\State{} $U^\prime \gets U$
		\Comment{} as long as changes happen

		\State{} $\lambda : V(G) \to \mathcal{P}(E(G))$
		\Comment{} \trext{} classes a vertex is member of,
		\State{}
		\Comment{} classes are represented by a single edge (from Union-find), defaults to $\emptyset$

		\ForAll{$uv \in E(G)$}
		\State{} $e^\prime \gets \Call{Find}{U, uv}$
		\State{} $\lambda(u) \gets \lambda(u) \cup \{e^\prime\}$
		\State{} $\lambda(v) \gets \lambda(v) \cup \{e^\prime\}$
		\EndFor{}

		\ForAll{$v \in V(G)$}
		\If{$\exists u, w \in N(v) : \lambda(u) \cap \lambda(w) \not= \emptyset$}
		\State{} $U \gets \Call{Join}{U, vu, vw}$
		\Comment{} edges are over a same component
		\EndIf{}
		\EndFor{}
		\EndWhile{}

		\State{} $P \gets \Call{Sets}{U}$

	\end{algorithmic}
\end{algorithm}

When searching for all NAC-colorings naively, we have to test $2^{m-1}$ colorings,
where $m$ is the number of NAC-mono classes (we can fix the color of one class).
Hence, using a good NAC-valid relation can reduce the computation time significantly.

\subsection{Small cycles}
\label{sec:small_cycles}

The check \IsNACColoring{} can be quite computationally expensive
considering it is called for every possible coloring.
An optimization we use is to keep a list of shorter cycles in the graph and
check if they are almost monochromatic cycles before doing the full check.

The check for a single cycle can be done in linear time
in the number of NAC-mono classes using bitwise arithmetic.
One cycle $C$ can reject up to $\frac{1}{|E(C)|}$ colorings,
which can be significant for small cycles.

Given a graph $G$ and $k,\ell\in \mathbb{N}$,
we adapt the idea to NAC-mono classes as follows:
for every NAC-mono class $M$ in $G$,
we consider all edges $uv\in M$ such that there is a~path from $u$ to $v$
in $G - M$ that uses at most $k$ NAC-mono classes.
We pick up to~$l$ cycles constructed from $uv$ with the corresponding
path with the least number of used NAC-mono classes.
In our implementation, we take $k=4$ and $l=2$.

\subsection{Combining NAC-colorings of subgraphs}%
\label{sec:combining}

In some cases, a graph can be decomposed into subgraphs so that
the NAC-colorings of the graph can be obtained directly from the NAC-colorings of the subgraphs.
We introduce notation for this process of combining NAC-colorings of suitable subgraphs.

\begin{definition}
	Let $G$ be a graph with subgraphs $G_1, \dots, G_k$
	s.\ t.\ $\bigcup_{i=1}^k E(G_i) =E(G)$.
	For $1\leq i \leq k$, let $\delta^{i}_{\red}$ and
	$\delta^{i}_{\blue}$ be the two monochromatic colorings of $G_i$.
	The set
	\[ \Bigl\{
		\text{surjective }\delta : E(G) \to \{\blue, \red\}
		\mid\forall i : \delta |_{E(G_i)} \in
		\nac{G_i} \cup \{\delta^{i}_{\blue}, \delta^{i}_{\red}\}
		\Bigr\}
	\]
	is called the \emph{NAC-product} of $G_1, \dots, G_k$ and denoted by $\CP(G_1, \dots, G_k)$.
\end{definition}

Since the restriction of a NAC-coloring to a subgraph is a NAC-coloring or monochromatic,
we have $\nac{G_{1} \cup G_{2}} \subseteq \CP(G_{1}, G_{2})$.
If $G_1, \dots, G_k$ are the connected components of $G$,
then $\nac{G} = \CP(G_1,\ldots, G_k)$.
Since every cycle is contained in a single
block\footnote{A \emph{block} is a bridge or a maximal 2-connected subgraph.}
of a graph, we have that $\nac{G} = \CP(G_1,\ldots, G_k)$
for $G_1,\ldots, G_k$ being the blocks of $G$,
see~\cite[Lemma~5.10]{nac_minimally_rigid} for a formula relating the numbers of NAC-colorings.

In order to design a faster algorithm for searching for all NAC-colorings,
we exploit the fact that for edge-disjoint graphs $G_1$ and $G_2$,
it is straightforward to construct $\CP(G_{1}, G_{2})$ once we know $\nac{G_1}$ and $\nac{G_2}$.
Then, we can get $\nac{G_{1} \cup G_{2}}$ by applying \IsNACColoring{}
to each coloring in $\CP(G_{1}, G_{2})$.
This can significantly reduce the number of \IsNACColoring{} calls
compared to testing all red-blue colorings of $G_{1} \cup G_{2}$.
To apply the idea on a graph $G$, we:
\begin{enumerate}
	\item decompose $G$ into pairwise edge-disjoint subgraphs $G_1, \ldots, G_\ell$, and
	\item compute the NAC-product $\CP(G_1, \dots, G_\ell)$ using $\nac{G_1}, \ldots, \nac{G_\ell}$
	      and filter it to get $\nac{G}$.
\end{enumerate}
In the following two subsections, we discuss possible heuristics
for these two phases.

\subsection{Subgraph decomposition}
\label{sec:decomposition}

As the input, we assume the list of NAC-mono classes
and an integer $k\geq 1$, and we output a list of subgraphs
such that each has at most $k$ NAC-mono classes.
We describe some heuristics for how to split a graph into edge-disjoint subgraphs
that perform well with the idea of the previous subsection.

The strategy with the least overhead costs is to take the chunks of $k$ consecutive
NAC-mono classes in the input list (heuristic \None{}).

Our goal is to maximize the number of suitable cycles in each
subgraph as cycles may form almost cycles and those help us to reduce the search
space.
We call the following heuristic \Neighbors, its pseudocode is in \Cref{alg:neighbors}.
The algorithm aims to mimic breadth-first-search
behavior while respecting NAC-mono classes.
The goal is to find monochromatic classes that are close together.
Let $P$ be the list of NAC-mono classes,
for simplicity we assume that $k$ divides $|P|$.
We will gradually divide the NAC-mono classes into bags,
the output is the subgraphs induced by the edges in each bag.

\begin{algorithm}[ht]
	\caption{Heuristic \Neighbors}%
	\label{alg:neighbors}
	\begin{algorithmic}[1]

		\Require{} $G$
		\Comment{} a graph
		\Require{} $P$
		\Comment{} NAC-mono classes of $G$
		\Require{} $k$, $b \gets |P|/k$
		\Comment{} target size of bags, the number of bags

		\Ensure{} $O \gets (B_1, \ldots, B_b)$
		\Comment{} bags with NAC-mono classes

		\While{$P \ne \emptyset$}
		\State{} $p \gets p \in P$
		\State{} $P \gets P \setminus \{p\}$
		\Comment{} random NAC-mono class

		\State{} $B \gets \underset{B_i \in O}{\arg\min} |B_i|$
		\Comment{} the most empty bag

		\State{} $used \gets \{{v \in M} \mid {M \in B } \}$
		\Comment{} vertices already used by this bag
		\State{} $open \gets  \{w \in N(u) \mid u \in used\} \setminus used $
		\Comment{} neighbors, candidates for addition

		\While{$|open| > 0 \land |B| < k$}

		\State{} $best \gets \underset{u \in open}{\arg\max}|used \cap N(u)|$
		\Comment{} vertices with the largest neighborhood

		\For{$n \in N(best) \land |B| < k$}
		\State{} $p \gets p \in P$ such that $\{best, n\} \in p$
		\State{} $P \gets P \setminus \{p\}$

		\State{} $B \gets B \cup \{p\}$

		\State{} $used \gets used \cup \{v \mid {v \in p} \}$
		\State{} $open \gets  \{w \in N(u) \mid u \in used\} \setminus used $
		\EndFor{}
		\EndWhile{}
		\EndWhile{}
	\end{algorithmic}
\end{algorithm}

We start with $\frac{|P|}{k}$ empty bags
and all vertices of the graph labeled $open$.
First we add a random NAC-mono class from the remaining ones to a bag.
All the vertices of the NAC-mono classes in the bag are denoted by $used$.
We take all $open$ vertices that are neighbors of the vertices in $used$ and assign
them a score. A vertex $best$ with the highest score is chosen.
The NAC-mono classes corresponding to edges connecting the vertex $best$
with vertices in $used$ are then added to the bag while not exceeding its size.
The $used$ set is updated, and the algorithm continues until the bag is full.
If $open$ is empty, this iteration of the search also ends.
A new bag is chosen and we repeat the process.

The first strategy \Neighbors\ takes as the score the number of
neighboring vertices of a vertex in the $used$ set.
The other strategy \NeighborsDegree\ is based
on the first and adds another rule ---
if the numbers of neighbors match, the vertex with lower degree is chosen.

\subsection{Subgraph merging}
\label{sec:merging}

After constructing the list of edge-disjoint subgraphs,
we compute all the NAC-colorings for each of them using
the naive algorithm with NAC-mono classes
and improved check on cycles.
Then the results need to be merged to obtain the NAC-colorings of the original graph
using the NAC-product.
Since checking every coloring in the NAC-product is a~costly operation,
we try to minimize the work that has to be done.
The complexity of the task grows with the size of
the merged subgraphs (as \IsNACColoring{} depends on the size of the graph checked)
and also with the number of NAC-colorings found in~each subgraph.
It is also important to note that if the merged subgraphs
have at most one vertex in common, we get the whole NAC-product.

We describe two strategies.
The first approach \MergeLinear{} is
to take the sequence of subgraphs and merge them one by one.
We merge the first and the second subgraph, then we merge this result with the
third one and so on.
We have tried also a tree-like approach, i.e., to merge consecutive pairs
and then recursively again, but the performance was worse.
Another approach, called \SharedVertices{}, always merges two subgraphs that
have the most vertices in common with the goal of
creating as many new cycles as possible.

\section{Benchmarks}%
\label{sec:benchmarks}

In this section, we analyze the performance of
our implementation, which is written in Python and uses the NetworkX library~\cite{networkx}.
We provide the code in~\cite{nac_code},
which also contains tools required to run and analyze benchmarks,
along with precomputed results and generated graph classes.
An additional notebook is provided for experiments with the algorithm itself.
The NAC-coloring search algorithms are implemented as iterators.
Hence, if one requires only one or a few NAC-colorings,
they can be obtained without generating all of them.

The benchmarks were run on Linux on a laptop with an Intel i7 processor of the 11th generation
with CPython 3.12 and SageMath 10.4.
Throughout this section, we write just rigid instead of $2$-rigid, since no other dimension is considered.

\Cref{tab:allMinRigid} shows the time required for finding all the NAC-colorings
of all minimally rigid graphs with a given number of vertices (generated using Nauty~\cite{nauty}
with Nauty Laman plugin~\cite{nauty_plugin}).
We show the results of the implementation in \flexrilog{}~\cite{flexrilog} (SageMath)
and compare it to our implementation of the same naive algorithm using $\triangle$-connected components.
Next, we display the timings for the naive algorithm with \trext{} classes
described in \Cref{sec:NACvalid} using only \IsNACColoring{}
and then with small cycles optimization from \Cref{sec:small_cycles} (\NaiveCycles).
The last column is for the \NeighborsDegree{} (each subgraph has $k=4$ \trext{} classes)
with \MergeLinear{} merging.

\begin{table}[ht]
	\caption{The time (in seconds) needed to find all NAC-colorings for all minimally rigid graphs with a given number of vertices.}
	\label{tab:allMinRigid}
	\vspace{0.3cm}
	\centering
	\begin{tabular}{ccccccc}
		\hline
		\,$|V(G)|$\, & \,\#graphs\, & \,\textsc{FlexRiL.} & $\triangle$-comps.\, & \trextshort\, & \textsc{NaiveCyc.} & \textsc{NeighDeg.}\, \\
		\hline
		% 5        & 3           & 0.007 s      & 0.002 s            & 0.001 s       & 0.001 s & 0.002 s          \\
		% 6        & 13          & 0.063 s      & 0.030 s            & 0.010 s       & 0.005 s & 0.007 s          \\
		% 7        & 70          & 0.57 s       & 0.052 s            & 0.047 s       & 0.029 s & 0.041 s          \\
		8            & 608          & 14               & 1.09                   & 0.97         & 0.36       & 0.49                   \\
		9            & 7\,222       & 509              & 34                     & 29           & 5.8        & 8.6                    \\
		10           & 110\,132     & 27k              & 1\,725                 & 1\,446       & 151        & 213                    \\
		11           & 2\,039\,273  & -                & -                      & -            & 5\,440     & 6\,650                 \\
		\hline
	\end{tabular}
\end{table}

In \Cref{fig:graph_summary},
we show the relation between the number of \IsNACColoring{} checks that
the naive algorithm would perform compared to our solution.
The values are similar for graphs with few \trext{} classes,
which alongside the additional overhead
explains why the cycles-improved naive search outperformed
the \NeighborsDegree{}\&\MergeLinear{} algorithm in \Cref{tab:allMinRigid}.
This should change quickly for larger graphs.
We can also see how the cycles optimization
reduces the number of more expensive \IsNACColoring{} calls,
since these are called only when the small cycles check described
in \Cref{sec:small_cycles} does not reject the considered edge coloring.

\begin{figure}[ht]
	\centering
	\scalebox{0.72}{\input{./figures/graph_export_check-comparision_exp_edge_num_monochromatic_checks_checks_mean.pgf}}
	\caption{The number of \IsNACColoring{} check calls with respect to the number of \trext{} classes
		on some minimally rigid graphs.}%
	\label{fig:graph_summary}
\end{figure}

In the following benchmarks,
we focus on two main tasks: listing all
NAC-colorings of a~graph and finding any NAC-coloring of a graph.
Each benchmark was run twice and the mean was taken.
The graphs are grouped by the number of \trext{} classes.

We present the performance of listing all NAC-colorings
on larger randomly generated minimally rigid graphs depending on the strategy used.
The graphs were randomly generated using NetworkX and PyRigi~\cite{pyrigi}.
The runtimes are in \Cref{fig:graph_time_minimally_rigid}
and the number of \IsNACColoring{} calls in \Cref{fig:graph_count_minimally_rigid}.
Here you can see that the naive approach gets significantly worse for sixteen \trext{} classes and more.
Between nine and sixteen vertices,
the algorithms we propose are already better than the naive one considering the number of checks called,
but worse in runtime, which is caused by additional overhead.

\begin{figure}[ht]
	\centering
	\scalebox{0.72}{\input{./figures/graph_export_minimally_rigid_random_all_monochromatic_runtime_split_merging_mean.pgf}}
	\caption{Mean running time to find all NAC-colorings for minimally rigid graphs.}%
	\label{fig:graph_time_minimally_rigid}
\end{figure}

\begin{figure}[ht]
	\centering
	\scalebox{0.72}{\input{./figures/graph_export_minimally_rigid_random_all_monochromatic_checks_split_merging_mean.pgf}}
	\caption{The number of checks performed to list all NAC-colorings for minimally rigid graphs.}%
	\label{fig:graph_count_minimally_rigid}
\end{figure}

\Cref{fig:graph_time_globally_rigid_some} presents running times of
searching for a single NAC-coloring on randomly generated \emph{globally rigid} graphs
(these are graphs that generically have a unique realization with the same edge lengths
up to rotation, translation, and reflection).
The probability threshold function from~\cite{glob_rigid_threshold} was used for generating them.
We observed that the number of \trext{} classes is usually much smaller
than the number of \trcon{} components.
Most of the graphs tested have many NAC-colorings, hence finding some of them
is fast for all the approaches.
When it comes to listing all NAC-colorings on globally rigid graphs,
\Cref{fig:graph_time_globally_rigid_all}
shows that our algorithm starts to outperform the naive approach by a significant margin.

\begin{figure}[ht]
	\centering
	\scalebox{0.72}{\input{./figures/graph_export_globally_rigid_threshold_first_monochromatic_runtime_split_merging_mean.pgf}}
	\caption{Mean running time to find some NAC-coloring for globally rigid graphs.}%
	\label{fig:graph_time_globally_rigid_some}
\end{figure}

\begin{figure}[ht]
	\centering
	\scalebox{0.72}{\input{./figures/graph_export_globally_rigid_threshold_all_monochromatic_runtime_split_merging_mean.pgf}}
	\caption{Mean running time to list all NAC-colorings for globally rigid graphs.}%
	\label{fig:graph_time_globally_rigid_all}
\end{figure}

Graphs with no NAC-coloring were also tested, see~\Cref{fig:graph_time_no_nac_coloring}.
These were generated randomly using the probability threshold function given in~\cite{thresholds}.
The \trext{} classes generating algorithm is so effective,
that we managed to find only a few graphs that have no NAC-coloring
while having more than one \trext{} class.
Therefore, these benchmarks were run using \trcon{} components only
as the main algorithm should behave similarly.
Neighbors-based strategies perform similarly to \None{}.
The naive algorithm failed to finish within thirty seconds
for most of the graphs with thirty or more components.

\begin{figure}[h]
	\centering
	\scalebox{0.72}{\input{./figures/graph_export_nac_critical_none_first_triangle_runtime_split_merging_mean.pgf}}
	\caption{Mean running time on NAC-critical graphs without NAC-colorings.}%
	\label{fig:graph_time_no_nac_coloring}
\end{figure}

For more benchmarks, strategies and discussions on the optimal size of subgraphs,
and on the performance of \trext{} classes compared to \trcon{} components,
see the thesis~\cite{bc_thesis} or the notebook in~\cite{nac_code}.

To conclude, we would like to emphasize that the algorithms we have proposed together with their implementation outperform
the naive one in \flexrilog{} by two orders of magnitude.
The implementation was used when preparing~\cite{nac_minimally_rigid}
and our code has been merged into the package PyRigi~\cite{pyrigi}, see method \texttt{Graph.NAC\_colorings}.
The described approaches are suitable for extending to special kinds of
NAC-colorings like \emph{cartesian}
or \emph{rotationally symmetric} NAC-colorings, see for instance~\cite{GL2024}
for the definitions.

\section*{Acknowledgments}
J.\,L.\ was supported by the Czech Science Foundation (GAČR), project No.\ 22-04381L\@.
This work was supported by the Student Summer Research Program 2024 of FIT CTU in Prague.

We would like to thank John Haslegrave for the discussion about probability thresholds
for generating random graphs without NAC-colorings,
Georg Grasegger for adding NAC-colorings to the math documentation of PyRigi, and
Michal Opler for an introduction to parametrized complexity and helpful suggestions.

\bibliography{refs}

@phdthesis{bc_thesis,
  author = {Petr Laštovička},
  title = {NAC-colorings search: complexity and algorithms},
  year = {2025},
  school = {Faculty of Information Technology, Czech Technical University in
            Prague},
  address = {Czech Republic},
  type = {Bachelor thesis},
  url = {https://hdl.handle.net/10467/123519},
}

@article{GLS2019,
  author = "Grasegger, Georg and Legerský, Jan and Schicho, Josef",
  title = "{Graphs with Flexible Labelings}",
  journal = "Discrete \& Computational Geometry",
  pages = "461-480",
  year = "2019",
  doi = "10.1007/s00454-018-0026-9",
}

@misc{nac_minimally_rigid,
  author = "Clinch, Katharine and Garamvőlgyi, Dániel and Haslegrave, John and
            Huynh, Tony and Legerský, Jan and Nixon, Anthony",
  title = "{Stable cuts, NAC-colourings and flexible realisations of graphs}",
  year = "2024",
  doi = {10.48550/arXiv.2412.16018},
}

@article{np_complete,
  author = {Dániel Garamvölgyi},
  title = {Global rigidity of (quasi-)injective frameworks on the line},
  journal = {Discrete Mathematics},
  volume = {345},
  number = {2},
  pages = {112687},
  year = {2022},
  issn = {0012-365X},
  doi = {10.1016/j.disc.2021.112687},
}

@article{generically_rigid_graphs,
  AUTHOR = {Asimow, Leonard and Roth, Ben},
  TITLE = {The rigidity of graphs},
  JOURNAL = {Transactions of the American Mathematical Society},
  VOLUME = {245},
  YEAR = {1978},
  PAGES = {279--289},
  DOI = {10.2307/1998867},
}

@article{laman_1970,
  title = "On graphs and rigidity of plane skeletal structures",
  author = "Gerard Laman",
  year = "1970",
  doi = "10.1007/BF01534980",
  volume = "4",
  pages = "331--340",
  journal = "Journal of Engineering Mathematics",
  number = "4",
}

@article{laman_original_1927,
  author = {Pollaczek-Geiringer, Hilda},
  title = {{Über die Gliederung ebener Fachwerke}},
  journal = {ZAMM - Journal of Applied Mathematics and Mechanics / Zeitschrift
             für Angewandte Mathematik und Mechanik},
  volume = {7},
  number = {1},
  pages = {58-72},
  doi = {10.1002/zamm.19270070107},
  year = {1927},
}

@article{stable_cuts,
  title = {On stable cutsets in claw-free graphs and planar graphs},
  journal = {Journal of Discrete Algorithms},
  volume = {6},
  number = {2},
  pages = {256-276},
  year = {2008},
  doi = {10.1016/j.jda.2007.04.001},
  author = {Van Bang Le and Raffaele Mosca and Haiko Müller},
}

@article{polynomial-min-rigid,
  title = {Pebble game algorithms and sparse graphs},
  journal = {Discrete Mathematics},
  volume = {308},
  number = {8},
  pages = {1425-1437},
  year = {2008},
  doi = {10.1016/j.disc.2007.07.104},
  author = {Audrey Lee and Ileana Streinu},
}

@article{nauty,
  author = "Brendan D. McKay and Adolfo Piperno",
  title = "Practical graph isomorphism, \{II\} ",
  journal = "Journal of Symbolic Computation ",
  volume = "60",
  number = "0",
  pages = "94 - 112",
  year = "2014",
  doi = "10.1016/j.jsc.2013.09.003",
}

@inproceedings{networkx,
  author = {Aric A. Hagberg and Daniel A. Schult and Pieter J. Swart},
  title = {{Exploring Network Structure, Dynamics, and Function using NetworkX}},
  booktitle = {Proceedings of the 7th Python in Science Conference},
  pages = {11 - 15},
  address = {Pasadena, CA USA},
  year = {2008},
  editor = {G. Varoquaux and T. Vaught and J. Millman},
}

@inproceedings{flexrilog,
  author = "Grasegger, Georg and Legerský, Jan",
  title = "{FlexRiLoG---A SageMath Package for Motions of Graphs}",
  editor = "Bigatti, A.{\,}M. and Carette, J. and Davenport, J.{\,}H. and Joswig
            , M. and de Wolff, T.",
  booktitle = "Mathematical Software -- ICMS 2020",
  year = "2020",
  pages = "442--450",
  series = {Lecture Notes in Computer Science},
  volume = {12097},
  doi = {10.1007/978-3-030-52200-1_44},
}

@misc{nauty_plugin,
  author = { Martin Larsson },
  title = {{Nauty Laman plugin}},
  year = {2020},
  publisher = {GitHub},
  journal = {GitHub repository},
  howpublished = {\url{https://github.com/martinkjlarsson/nauty-laman-plugin}},
}

@misc{pyrigi,
  title = {{PyRigi -- a general-purpose Python package for the rigidity and
           flexibility of bar-and-joint frameworks}},
  author = {Matteo Gallet and Georg Grasegger and Matthias Himmelmann and Jan
            Legerský},
  year = {2025},
  doi = {10.48550/arXiv.2505.22652},
}

@book{paramAlg,
  author = {Marek Cygan and Fedor V. Fomin and \L{}ukasz Kowalik and Daniel
            Lokshtanov and D{\'{a}}niel Marx and Marcin Pilipczuk and Micha\l{}
            Pilipczuk and Saket Saurabh},
  title = "{P}arameterized {A}lgorithms",
  year = "2015",
  publisher = "Springer Cham",
  doi = {10.1007/978-3-319-21275-3},
  edition = "1st",
}

@article{Courcelle,
  author = {Bruno Courcelle},
  title = {{The monadic second-order logic of graphs. I. Recognizable sets of
           finite graphs}},
  journal = {Information and Computation},
  volume = {85},
  number = {1},
  pages = {12-75},
  year = {1990},
  doi = {10.1016/0890-5401(90)90043-H},
}

@article{GL2024,
  author = {Grasegger, Georg and Legerský, Jan},
  title = "Flexibility and rigidity of frameworks consisting of triangles and
           parallelograms",
  journal = {Computational Geometry},
  volume = {120},
  pages = {102055},
  year = {2024},
  doi = {10.1016/j.comgeo.2023.102055},
}

@article{Dixon,
  author = {Dixon, Alfred C.},
  title = {{On certain deformable frameworks}},
  journal = {Messenger},
  number = {2},
  volume = {29},
  pages = {1--21},
  year = {1899},
}

@inproceedings{WalterHusty,
  author = {Dominic Walter and {Manfred L.} Husty},
  title = {On a nine-bar linkage, its possible configurations and conditions for
           paradoxical mobility},
  booktitle = {12th World Congress on Mechanism and Machine Science, IFToMM 2007
               },
  year = {2007},
}

@article{Le2003,
  author = {Le, Van Bang and Bert Randerath},
  doi = {10.1016/S0304-3975(03)00048-3},
  journal = {Theoretical Computer Science},
  number = {1-3},
  pages = {463--475},
  title = {{On stable cutsets in line graphs}},
  volume = {301},
  year = {2003},
}

@article{GLSclassification,
  AUTHOR = {Grasegger, Georg and Legerský, Jan and Schicho, Josef},
  TITLE = {On the classification of motions of paradoxically movable graphs},
  JOURNAL = {Journal of Computational Geometry},
  VOLUME = {11},
  YEAR = {2020},
  NUMBER = {1},
  PAGES = {548--575},
  DOI = {10.20382/jocg.v11i1a22},
}

@article{GLSinjective,
  author = {Grasegger, Georg and Legerský, Jan and Schicho, Josef},
  title = {Graphs with Flexible Labelings allowing Injective Realizations},
  doi = {10.1016/j.disc.2019.111713},
  journal = {Discrete Mathematics},
  volume = {343},
  number = {6},
  pages = {Art. 111713},
  year = {2020},
}

@misc{nac_code,
  author = {Laštovička, Petr and Legerský, Jan},
  title = {{Algorithms for NAC-coloring search -- implementation and benchmarks}},
  year = {2025},
  publisher    = {Zenodo},
  doi = {10.5281/zenodo.17594025},
}

@inproceedings{Saxe1979,
  author = {Saxe, James},
  title = {{Embeddability of weighted graphs in k-space is strongly NP-Hard}},
  booktitle = {Proc. 17th Allerton Conf. in Communications, Control, and
               Computing},
  pages = {480--489},
  year = {1979},
}

@inproceedings{Schaefer2013,
  author = "Schaefer, Marcus",
  title = "Realizability of Graphs and Linkages",
  editor = "Pach, J{\'a}nos",
  bookTitle = "Thirty Essays on Geometric Graph Theory",
  year = "2013",
  publisher = "Springer New York",
  pages = "461--482",
  doi = "10.1007/978-1-4614-0110-0_24",
}

@article{tree_width_np_complete,
  author = {Arnborg, Stefan and Corneil, Derek G. and Proskurowski, Andrzej},
  title = {Complexity of Finding Embeddings in a k-Tree},
  journal = {SIAM Journal on Algebraic Discrete Methods},
  volume = {8},
  number = {2},
  pages = {277-284},
  year = {1987},
  doi = {10.1137/0608024},
}

@article{tree_width_approximation,
  author = {Eyal Amir},
  title = {Approximation Algorithms for Treewidth},
  journal = {Algorithmica},
  year = {2010},
  volume = {56},
  pages = {448-479},
  doi = {10.1007/s00453-008-9180-4},
}

@inproceedings{AbelDemainEtAl,
  author = {Abel, Zachary and Demaine, Erik D. and Demaine, Martin L. and
            Eisenstat, Sarah and Lynch, Jayson and Schardl, Tao B.},
  title = {{Who Needs Crossings? Hardness of Plane Graph Rigidity}},
  booktitle = {32nd International Symposium on Computational Geometry (SoCG
               2016)},
  pages = {3:1--3:15},
  series = {Leibniz International Proceedings in Informatics (LIPIcs)},
  year = {2016},
  volume = {51},
  editor = {Fekete, S\'{a}ndor and Lubiw, Anna},
  publisher = {Schloss Dagstuhl -- Leibniz-Zentrum f{\"u}r Informatik},
  doi = {10.4230/LIPIcs.SoCG.2016.3},
}

@article{DLinfinite,
  author = {Sean Dewar and Jan Legerský},
  title = {{Flexing infinite frameworks with applications to braced Penrose
           tilings}},
  journal = {Discrete Applied Mathematics},
  volume = {324},
  pages = {1--17},
  year = {2023},
  issn = {0166-218X},
  doi = {10.1016/j.dam.2022.09.002},
}

@misc{thresholds,
      title={{Sharp thresholds for NAC-colourings and stable cuts in random graphs}},
      author={Katie Clinch and John Haslegrave and Tony Huynh and Anthony Nixon},
      year={2025},
      doi={10.48550/arXiv.2510.05838},
}

@article{glob_rigid_threshold,
author = {Jackson, Bill and Servatius, Brigitte and Servatius, Herman},
title = {The 2-dimensional rigidity of certain families of graphs},
journal = {Journal of Graph Theory},
volume = {54},
number = {2},
pages = {154-166},
doi = {10.1002/jgt.20196},
year = {2007}
}
\end{document}